\newcolumntype{R}[1]{>{\raggedleft\let\newline\\\arraybackslash\hspace{0pt}}m{#1}}
\newcommand{\Oh}{\mathcal{O}}
\newcommand{\freq}{f}
\newcommand{\rank}{\mathsf{rank}}
\newcommand{\select}{\mathsf{select}}
\newcommand{\access}{\mathsf{access}}
\newcommand{\bitvector}{V}
\newcommand{\quadruple}{D}
\newcommand{\quadrupledecomp}{\mathcal{D}}
\newcommand{\block}{B}
\newcommand{\tree}{T}
\newcommand{\inputarr}{\mathsf{A}}
\spnewtheorem{observation}[theorem]{Observation}{\bfseries}{\itshape}
\title{Optimal Query Time for Encoding Range Majority}
\author{Pawe{\l} Gawrychowski and Patrick K. Nicholson}
\institute{University of Haifa, Israel and Nokia Bell Labs, Ireland}
\begin{document}

\pagestyle{plain}
\maketitle

\begin{abstract}
We revisit the range $\tau$-majority problem, which asks us to
preprocess an array $\inputarr[1..n]$ for a fixed value of $\tau \in
(0,1/2]$, such that for any query range $[i,j]$ we can return a position
  in $\inputarr$ of each distinct $\tau$-majority element.  A
  $\tau$-majority element is one that has relative frequency at least
  $\tau$ in the range $[i,j]$: i.e., frequency at least $\tau
  (j-i+1)$.  Belazzougui et al. [WADS 2013] presented a data structure
  that can answer such queries in $\Oh(1/\tau)$ time, which is
  optimal, but the space can be as much as $\Theta(n \lg n)$ bits.
  Recently, Navarro and Thankachan [Algorithmica 2016] showed that
  this problem could be solved using an $\Oh(n \lg (1/\tau))$ bit
  encoding, which is optimal in terms of space, but has suboptimal
  query time.  In this paper, we close this gap and present a data
  structure that occupies $\Oh(n \lg (1/\tau))$ bits of space, and has
  $\Oh(1/\tau)$ query time.  We also show that this space bound is
  optimal, even for the much weaker query in which we must decide
  whether the query range contains at least one $\tau$-majority
  element.
\end{abstract}

\section{Introduction}

Misra and Gries~\cite{MG82} generalized a classic $2$-pass algorithm
by Boyer and Moore~\cite{BM91} for finding majorities in lists of
elements.  Formally, a $\tau$-majority of a list of length $n$ (or
$\tau$-heavy-hitter) is an element that appears with frequency at least
$\tau \cdot n$.  More recent variants and improvements~\cite{DLM02,KSP03} to
the Misra-Gries algorithm have become standard tools in a wide variety
of applications involving streaming analytics, such as IP traffic
monitoring, data mining, etc.

In this paper we consider the data structure variant of the problem.
Suppose we are given an array $\inputarr$ of $n$ elements. The goal is
to preprocess the array into a data structure that supports
\emph{range $\tau$-majority queries}: given an arbitrary subarray
$\inputarr[i..j]$, return all distinct elements that are
$\tau$-majorities in $\inputarr[i..j]$.  As an example application, we
may wish to construct such a data structure on network traffic logs,
to perform an analysis of how the set of frequent users change over
different timescales.

In the last few years, this problem has received a lot of
attention~\cite{KN08,DHMNS13,GHMN11,BGN13}, finally leading to a
recent result of Belazzougui et al.~\cite{BGN13,BGMNN16}: these
queries can be supported in $\Oh(1/\tau)$ time, using
$(1+\varepsilon)nH_0 + o(n)$ bits of space, where $H_0$ is the zero-th
order empirical entropy of the array $\inputarr$, and $\varepsilon$ is
an arbitrary positive constant.\footnote{Note that, for this and all
  forthcoming results discussed, we assume the word-RAM model of
  computation with word-size $w = \Omega(\lg n)$ bits; we use $\lg x$
  to denote $\log_2 x$.  We also note that Belazzougui et
  al.~\cite{BGMNN16} also considered a slightly more difficult problem
  in which $\tau$ can be specified at query time, rather than fixed
  once-and-for-all before constructing the data structure.}  Since,
for an arbitrary $\tau$-majority query, there can be $\lfloor 1/\tau
\rfloor$ answers, there is not much hope for significantly improving
the query time of $\Oh(1/\tau)$, except perhaps to make the time bound
output-sensitive on the number of results
returned~\cite[Sec.7]{BGMNN16}.

On the other hand, much more can be said about the space bound.
Note that, in general, if $\inputarr$ contains elements drawn from the
alphabet $[1,\sigma]$, then we can represent it using $n \lceil
\lg \sigma \rceil$ bits.  If $f_i$ is the frequency of element $i \in
    [1,\sigma]$, then we have $nH_0 = n \sum_i
    \left((f_i/n)\lg(n/f_i)\right) \le n \lceil \lg \sigma
    \rceil$.\footnote{We follow the convention that $(f_i/n)\lg(n/f_i)
      = 0$ if $f_i = 0$.} Since the bound of Belazzougui et
    al.~\cite{BGMNN16} depends on the entropy of the elements in
    $\inputarr$, it can therefore can be $\Theta(n \lg n)$ bits, if
    $\sigma = \Omega(n^c)$ for any constant $c \le 1$, and the
    distribution is close to uniform.  However, quite recently,
    Navarro and Thankachan~\cite{NT16} showed that this space bound
    can be improved significantly in the \emph{encoding model}.

In the encoding model, given array $\inputarr$ as input, we are
allowed to construct an encoding that supports a specific query
operation on $\inputarr$.  After constructing the encoding, the array
$\inputarr$ is deleted, and queries must be supported by accessing
only the encoding.  For many query operations, we can achieve space
bounds that are much smaller than the space required to store
$\inputarr$.  One issue is that for range $\tau$-majority queries, if
we return the actual element which is a $\tau$-majority, then we must
store at least as many bits as are required to represent $\inputarr$.
This follows since an encoding supporting such queries can be used to
return the contents of the array $\inputarr$ by querying the range
$\inputarr[i..i]$ for each $1 \in [1,n]$.

Navarro and Thankachan~\cite{NT16} therefore considered a different
query, in which, for each $\tau$-majority $a$ in the query range
$\inputarr[i..j]$, we instead return an arbitrary position $\ell$ in
$\inputarr$ such that $\inputarr[\ell] = a$ and $i \le \ell \le j$.
In the remainder of the paper, we use \emph{range $\tau$-majority
  position query} to refer to this positional variant of the query
operation.  Navarro and Thankachan~\cite{NT16} showed two main
results:

\begin{theorem}[\cite{NT16}, Theorems 1 and 2]
\begin{enumerate}
\item For any $\tau \in (0,1)$, there is an encoding that occupies
  $\Oh(n \lceil \lg(1/\tau) \rceil)$ bits of space that supports range
  $\tau$-majority position queries in:
  \begin{enumerate}
  \item $\Oh((1/\tau)\lg n)$ time if $1/\tau =
    o(\textup{polylog}(n))$.
  \item $\Oh(1/\tau)$ time if $1/\tau = \Theta(\textup{polylog}(n))$.
  \item $\Oh(1/\tau \lg\lg_w(1/\tau))$ time if $1/\tau =
    \omega(\textup{polylog}(n))$.
  \end{enumerate}
\item Any encoding that can support \emph{range $\tau$-majority
  counting queries} (i.e., return the total the number of
  $\tau$-majorities) in an arbitrary query range $\inputarr[i..j]$
  occupies space (in bits) at least $\frac{n}{4}\left(\lg\left(\frac{1}{2\tau} - 1\right) - \lg e\right) = \Omega(n \lg(1/\tau))$.
\end{enumerate}
\end{theorem}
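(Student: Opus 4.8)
I concentrate on part~2, the space lower bound, whose proof carries the essential idea; for part~1 I would only remark that it is an upper bound refining known range-$\tau$-majority data structures so that they record \emph{positions} witnessing each $\tau$-majority rather than the majority elements themselves, with the three stated regimes arising from how the auxiliary structures trade off against one another depending on the size of $1/\tau$ relative to $\polylog(n)$. The plan for part~2 is a classical encoding argument. I would build a family $\mathcal{F}$ of length-$n$ arrays with two properties: (i) any two distinct $\inputarr,\inputarr'\in\mathcal{F}$ produce different answers to some range $\tau$-majority counting query; and (ii) $\lg|\mathcal{F}|\ge\tfrac{n}{4}\bigl(\lg(\tfrac{1}{2\tau}-1)-\lg e\bigr)$. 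Property~(i) forces every encoding $E$ that supports counting queries to be injective on $\mathcal{F}$, since $E(\inputarr)=E(\inputarr')$ would make all counting queries answer identically on $\inputarr$ and $\inputarr'$; hence some array in $\mathcal{F}$ requires at least $\lg|\mathcal{F}|$ bits, which by~(ii) is $\tfrac{n}{4}\bigl(\lg(\tfrac{1}{2\tau}-1)-\lg e\bigr)=\Omega(n\lg(1/\tau))$.

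The construction. Let $t=\bigl\lfloor\tfrac{1}{2\tau}-1\bigr\rfloor=\Theta(1/\tau)$. Fix a ``skeleton'': a set of $n/4$ \emph{payload slots}, each a block of $\Oh(1)$ positions, interleaved with a fixed \emph{frame}. The frame is made of globally distinct \emph{separator} symbols --- each occurring once, hence never a $\tau$-majority in a window of length exceeding $1/\tau$ --- together with a fixed block of long runs $c_1^{R}c_2^{R}\cdots c_t^{R}$ meant to anchor the identities of $t$ \emph{payload colors} $c_1,\dots,c_t$. A member of $\mathcal{F}$ is obtained by writing into the payload slots an arbitrary \emph{balanced} word over $\{c_1,\dots,c_t\}$, i.e.\ one using each color exactly $\tfrac{n}{4t}$ times, and $\mathcal{F}$ is the set of all arrays so obtained. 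Slot lengths, the placement of separators, and the run length $R$ are chosen to \emph{calibrate} windows: for any block of consecutive payload slots one can select query endpoints so that the $\tau$-majority threshold $\lceil\tau L\rceil$ on the resulting window of length $L$ equals any prescribed integer $f$, and also $f+1$.

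Distinguishability is the crux. Using the calibration, an appropriate collection of nested and sliding counting queries lets one read off, for every block of consecutive payload slots and every target $f$, the number of payload colors occurring at least $f$ times in that block; the separators only add a fixed, known quantity to each count and are inert in the long windows, while the anchor runs let one break the relabeling symmetry inherent in counting queries so that individual colors can be probed. A divide-and-conquer recovery over the payload slots then reconstructs the whole balanced word, hence the array, giving~(i). The main obstacle is exactly this engineering of the skeleton: making the separators inert in all the queries used, making every threshold of interest hit exactly despite the $\lceil\tau L\rceil$ rounding, fitting the entire frame into the $3n/4$ non-payload positions, and laying out the anchor runs so each color can be isolated. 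If one only wants $\Omega(n\lg(1/\tau))$ and not the precise constant, the anchoring can be omitted and the balanced word recovered merely up to a global permutation of the $t$ colors, losing only the additive term $\lg(t!)=\Oh(t\lg t)$, which is lower-order for fixed $\tau$.

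Counting and conclusion. The number of balanced words is $|\mathcal{F}|=\frac{(n/4)!}{((n/(4t))!)^{t}}$. Applying $m!\ge(m/e)^m$ to the numerator and $m!\le m^m$ to the denominator gives $|\mathcal{F}|\ge(t/e)^{n/4}$, so $\lg|\mathcal{F}|\ge\tfrac{n}{4}(\lg t-\lg e)=\tfrac{n}{4}\bigl(\lg(\tfrac{1}{2\tau}-1)-\lg e\bigr)$, which is $\Omega(n\lg(1/\tau))$. Together with the injectivity of any counting-query encoding on $\mathcal{F}$, this yields the claimed lower bound.
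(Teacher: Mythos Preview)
The theorem you are proving is not proved in this paper at all: it is quoted verbatim from Navarro and Thankachan~\cite{NT16} as prior work, with no accompanying argument. So there is no ``paper's own proof'' to compare your proposal against. The paper's own lower-bound contribution is the separate Theorem~\ref{thm:lower-bound}, which targets the strictly weaker \emph{decision} query and uses a quite different gadget-and-padding construction (recovering concatenated permutations via carefully calibrated decision queries) rather than a balanced-word counting argument.

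On the merits of your sketch for part~2: the overall plan --- build a large family of arrays, show they are pairwise distinguishable by counting queries, and lower-bound $\lg|\mathcal{F}|$ via a multinomial count --- is standard and the final computation $|\mathcal{F}|\ge (t/e)^{n/4}$ is fine. But the distinguishability step is where the whole proof lives, and you have not actually carried it out. You assert that ``calibrated'' windows let you read off, for every block and every threshold $f$, how many colors occur at least $f$ times, and that anchor runs let you break the relabeling symmetry; then a ``divide-and-conquer recovery'' reconstructs the word. None of these claims is verified: you do not exhibit the skeleton, show that the threshold $\lceil\tau L\rceil$ hits every needed value, argue that separators remain inert in every query you use, or explain concretely how counting queries against the anchor runs isolate a single color. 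You yourself flag this as ``the main obstacle.'' As written, this is a believable outline rather than a proof; the hard combinatorial engineering is deferred, not done.
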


\noindent Thus, their lower bound implies that their space bound,
which depends only on $n$ and $\tau$ rather than elements in the input
array $\inputarr$, is optimal.  However, there is gap between the
query time of their encoding and the data structure of Belazzougui et
al.~\cite{BGMNN16} for the case where $1/\tau$ is not
$\Theta(\textup{polylog}(n))$.  Crucially, this does not yield optimal
time in the important case where $1/\tau$ is a constant.  In this
paper, we close this time gap, and prove the following theorem:

\begin{theorem}
\label{thm:upper-bound}
For any $\tau \in (0,1/2]$, there is an encoding that occupies $\Oh(n
  \lg (1/\tau))$ bits of space that can answer range $\tau$-majority
  position queries in $\Oh(1/\tau)$ time.\footnote{Note that for $\tau
    \in (1/2,1)$ we can use the $O(n)$ bit structure built for
    $1/2$-majorities to answer all queries.}
\end{theorem}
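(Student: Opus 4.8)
The plan is to follow the standard ``candidate generation plus verification'' paradigm for range majority, but to implement \emph{both} phases in $\Oh(n\lg(1/\tau))$ bits and in total time $\Oh(1/\tau)$; this is exactly the regime where the encodings of Navarro and Thankachan~\cite{NT16} pay an extra $\lg n$ or $\lg\lg_w(1/\tau)$ factor. I would build a balanced binary tree $\tree$ over $\inputarr[1..n]$ whose nodes induce dyadic blocks $\block$, and use the elementary observation that every $\tau$-majority of a query range $\inputarr[i..j]$ with $L=j-i+1$ is a $\Theta(\tau)$-majority of one of only $\Oh(1)$ dyadic blocks: splitting $[i,j]$ at the relevant node boundary reduces the query to at most two one-sided queries, each on a prefix or suffix of a node of $\tree$ at the level where the block length is $\Theta(\tau L)$, and each such one-sided query in turn forces any answer to be a $\Theta(\tau)$-majority of a \emph{single} dyadic block determined by the query endpoints and $L$. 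It therefore suffices to (i) equip each node $v$ with a set of $\Oh(1/\tau)$ candidate positions inside $\block_v$ that includes one occurrence of each $\Theta(\tau)$-majority of $\block_v$, and (ii) give, for an individual candidate position $\ell\in[i,j]$, a way to decide whether $\inputarr[\ell]$ occurs at least $\lceil\tau L\rceil$ times in $[i,j]$.

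For phase (i) the obstacle is space: storing one explicit $\lg n$-bit position per candidate per node of $\tree$ totals $\Theta((n/\tau)\lg n)$ bits, far above the budget. The idea I would pursue is to store candidate positions only \emph{relative} to a coarse sampling of $\inputarr$ that is shared by all levels of $\tree$, and to exploit the fact that every $\Theta(\tau)$-majority of a node is already a $\Theta(\tau)$-majority of one of its two children, so that the information a node must carry beyond what its children carry amortizes to $\Oh(\lg(1/\tau))$ bits per array position. At query time one reconstructs, for each of the $\Oh(1/\tau)$ candidates of a touched node, an exact in-range occurrence with $\Oh(1)$ work per candidate by reading the relative data in $\Oh(\lg(1/\tau))$-bit chunks and consulting precomputed word-RAM tables of size $n^{o(1)}$.

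Phase (ii) is the main difficulty, because in the encoding model the value $\inputarr[\ell]$ is not available, so we cannot invoke $\rank$ or $\select$ on it. The plan is to encode, for each position $\ell$, enough information about the occurrence list of $\inputarr[\ell]$ --- essentially the positions reached by jumping geometrically many occurrences forward and backward, for the $\Oh(\lg(1/\tau))$ scales $1,2,4,\dots,\Theta(1/\tau)$ --- but, as in phase (i), expressed against the shared coarse sampling rather than as absolute positions, so that this costs only $\Oh(\lg(1/\tau))$ bits per position. Since a query only needs a threshold test, and only for the $\Oh(1/\tau)$ candidate positions, I expect verifying one candidate to reduce to $\Oh(1)$ comparisons among sampled-block indices plus a single table lookup on $\Oh(\lg(1/\tau))$-bit words, giving $\Oh(1/\tau)$ time overall. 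The part I anticipate being genuinely delicate is calibrating this shared sampling so that it is simultaneously fine enough to pin down an exact in-range occurrence for every candidate and to realise these occurrence-list jumps, yet coarse enough that all of the auxiliary pointers really do fit in $\Oh(\lg(1/\tau))$ bits per position.

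Finally I would assemble the pieces: bound the total space by adding up the per-position costs of the shared sampling, the nodes' relative candidate lists, and the occurrence-list jump information, each $\Oh(\lg(1/\tau))$ bits per position, for $\Oh(n\lg(1/\tau))$ bits in all; bound the query time by the $\Oh(1)$ blocks touched times $\Oh(1/\tau)$ candidates each, with $\Oh(1)$ amortised time per candidate for generation and verification; and argue correctness --- the decomposition observation ensures every $\tau$-majority of $[i,j]$ is produced as a candidate, the verification step ensures only genuine $\tau$-majorities are reported, and having one stored position per majority value of each block ensures distinct reported positions for distinct majority values.
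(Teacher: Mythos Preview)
Your plan diverges from the paper's precisely at the step you yourself flag as ``genuinely delicate,'' and I do not see how to close it. In phase~(ii) you propose to store, at each position $\ell$, the targets of geometric occurrence-list jumps at scales $1,2,\ldots,\Theta(1/\tau)$, expressed relative to an unspecified coarse sampling so that the total is $\Oh(\lg(1/\tau))$ bits per position. But the $2^k$-th next occurrence of $\inputarr[\ell]$ can lie anywhere in $[1,n]$ --- nothing forces $\inputarr[\ell]$ to be globally frequent --- so even a sampled-block index for it needs $\Theta(\lg n)$ bits unless the sampling granularity is $\Theta(n\tau)$, at which point a block is far too wide for an $\Oh(\lg(1/\tau))$-bit table lookup to localise anything. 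Even granting the space, a single witness $\ell\in[i,j]$ does not tell you how many occurrences of $\inputarr[\ell]$ lie left of $\ell$ inside $[i,j]$, so turning jump pointers into an exact threshold test seems to cost a binary search, i.e.\ $\Oh(\lg(1/\tau))$ per candidate rather than $\Oh(1)$. Phase~(i) has an analogous gap: summed over all levels there are $\Theta(n/\tau)$ (block, candidate) pairs, so getting down to $\Oh(n\lg(1/\tau))$ bits means encoding each pair in well under one bit on average; the inheritance observation you cite is correct but does not by itself give such a scheme.

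The paper's route is structurally different. Verification is done not with per-position pointers but with a bit vector $\bitvector(\quadruple,y)$ per (quadruple, candidate) pair, marking all occurrences of $y$ in an extended range; two $\rank$ queries then give the exact count in $[i,j]$ in $\Oh(1)$. The duplication of these bit vectors across levels is eliminated by a \emph{top-down pruning}: once $\bitvector(\quadruple,y)$ is stored, $y$ is made inactive below and lower quadruples keep only a short pointer back up; a charging argument then shows all bit vectors together take $\Oh(n\lg(1/\tau))$ bits. The pointers themselves, and the redundancy of the $\rank$ structure, are controlled by a separate \emph{bottom-up pruning} that replaces all sufficiently small quadruples by rank-reduced micro-arrays (large $1/\tau$) or by the slow optimal-space encoding plus a universal lookup table (small $1/\tau$); finally, when $1/\tau=\omega(\polylog n)$ the per-candidate bit vectors are replaced by a wavelet-tree batch-rank structure. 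These three devices --- the top-down charging, the micro-array/table bottom layer, and the batched wavelet-tree verification --- are what actually pin down both bounds, and none of them is present in your sketch.
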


Of course one could ask if $\Oh(1/\tau)$ is the right bound for the
query time at all. In the output-sensitive variant of the problem the
query time should depend on the number of results returned, which
might be up to $\Oh(1/\tau)$ but possibly smaller. However, we note
that a straightforward reduction from the set intersection conjecture
indicates that a significantly smaller query time cannot be guaranteed
even if the size of the output is 0 or 1: see
Appendix~\ref{setintersection}.

In terms of techniques, our approach uses the level-based
decomposition of Durocher et al.~\cite{DHMNS13}, but with three
significant improvements. We define two new methods for pruning their
data structure to reduce space, and one method to speed up queries.
The first pruning method is a top-down approach that avoids
replicating data structures at more than one level and is analysed
using a charging argument.  The second pruning method is bottom-up,
operating on small ranges of the input array, that we call
micro-arrays, and applies one of two strategies, depending on the
parameter $\tau$.  One of these strategies involves bootstrapping an
optimal space (but suboptimal query time) encoding by combining it
with pre-computed lookup tables in order to speed up queries on the
micro-arrays.  The other strategy stores (a rank reduced) copy of the
micro-array and solves queries in a brute-force manner. Finally, the
last improvement uses wavelet trees~\cite{GGV03} in a non-trivial way
in order to build a fast ranking data structures to improve query time
for the case when $1/\tau = \omega(\textup{polylog}(n))$.

\paragraph{Implications.}
Since the encoding yields the positions of each distinct
$\tau$-majority element in the query range, we can use our optimal
encoding as an alternative to the non-encoding data structure of
Belazzougui et al.  This is done by first compressing the original
array $\inputarr$ using any compressor that supports access in
$\Oh(1)$ to the underlying elements.

\begin{theorem}
Let $\mathcal{S}(n)$ be the space required to store the input array in
a compressed form such that each position can be accessed in $\Oh(1)$
time.  Then there is a data structure that occupies $\mathcal{S}(n) +
\Oh(n \lg(1/\tau))$ bits of space, and can return the range
$\tau$-majorities for an arbitrary range $[i,j]$ in $\Oh(1/\tau)$
time.
\end{theorem}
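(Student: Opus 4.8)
The plan is to simply layer the encoding of Theorem~\ref{thm:upper-bound} on top of any compressed-with-random-access representation of $\inputarr$. Concretely, I would store two components: (i) a compressed copy of $\inputarr$ occupying $\mathcal{S}(n)$ bits that supports $\access(\inputarr,\ell)$ in $\Oh(1)$ time, and (ii) the $\Oh(n\lg(1/\tau))$-bit encoding of Theorem~\ref{thm:upper-bound} built on $\inputarr$, which answers range $\tau$-majority \emph{position} queries in $\Oh(1/\tau)$ time. The total space is $\mathcal{S}(n) + \Oh(n\lg(1/\tau))$ bits as claimed. Note that the encoding of Theorem~\ref{thm:upper-bound} does not require access to $\inputarr$ after construction, so the two components can be built independently from the same input and $\inputarr$ itself need not be retained beyond forming the compressed copy; this also guarantees that both components use the same indexing of $\inputarr$.

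To answer a range $\tau$-majority query on $[i,j]$, I would first invoke the position query of component (ii), obtaining a set $P$ of positions such that $\{\inputarr[\ell] : \ell \in P\}$ is exactly the set of distinct $\tau$-majorities of $\inputarr[i..j]$, with each $\ell \in P$ lying in $[i,j]$ and $|P| \le \lfloor 1/\tau\rfloor$ (since the distinct $\tau$-majorities have relative frequencies summing to at most $1$). Then, for each $\ell \in P$, I would perform $\access(\inputarr,\ell)$ using component (i) to recover the element $\inputarr[\ell]$ and report it. Because the position query maps distinct $\tau$-majority elements to distinct positions, this reports each distinct $\tau$-majority exactly once.

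For the running time, the position query costs $\Oh(1/\tau)$ by Theorem~\ref{thm:upper-bound} and yields at most $\lfloor 1/\tau\rfloor$ positions, while each of the subsequent $\access$ calls costs $\Oh(1)$, for a total of $\Oh(1/\tau)$. For completeness, when $\tau \in (1/2,1)$ at most one $\tau$-majority can exist, and component (ii) can be replaced by the $\Oh(n)$-bit structure for $1/2$-majorities referenced in the footnote of Theorem~\ref{thm:upper-bound}, so the claim holds for all $\tau \in (0,1)$; instantiating $\mathcal{S}(n)$ with, e.g., an $nH_0 + o(n)$ or $nH_k + o(n)$ bit representation supporting $\Oh(1)$ access then gives the stated consequences.

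There is no substantial obstacle here: the statement is essentially a corollary of Theorem~\ref{thm:upper-bound} together with the existence of compressed arrays with $\Oh(1)$-time access. The only points that deserve explicit verification are that the position-query encoding is self-contained (so it can coexist with an unrelated compressed representation of the same array), and that the number of reported positions — and hence the number of $\access$ calls — is $\Oh(1/\tau)$, so that the overall query time matches that of Theorem~\ref{thm:upper-bound}.
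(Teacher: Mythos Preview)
Your proposal is correct and matches the paper's own argument essentially verbatim: the paper states this theorem as a direct implication of Theorem~\ref{thm:upper-bound}, obtained by storing any $\Oh(1)$-access compressed representation of $\inputarr$ alongside the $\Oh(n\lg(1/\tau))$-bit position-query encoding and then dereferencing the returned positions. There is nothing to add.
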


For example, using results for higher order entropy compression with
$\Oh(1)$ access time~\cite{GN06,FV07} yields the following:

\begin{corollary}
Let $\inputarr[1..n]$ be an array with elements drawn from
$[1,\sigma]$. There is a data structure that occupies $nH_k + o(n \lg
\sigma) + \Oh(n \lg(1/\tau))$ bits of space\footnote{$H_k$ denotes the
  $k$-th order empirical entropy of the sequence of elements in
  $\inputarr$: a lower bound on the space achievable by any compressor
  that considers the previous $k$ elements in the sequence.  For all
  $k \ge 1$ we have $n H_k \le nH_{k-1}$.}, and can support arbitrary
range $\tau$-majority queries in time $\Oh(1/\tau)$, for any $k =
o(\log_\sigma n)$.
\end{corollary}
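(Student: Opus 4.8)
The plan is to instantiate the preceding theorem with a concrete compressed string representation. First I would recall that Ferragina and Venturini~\cite{FV07}, and also Gonz\'alez and Navarro~\cite{GN06}, give a representation of a string $\inputarr[1..n]$ over alphabet $[1,\sigma]$ using $nH_k + o(n\lg\sigma)$ bits that supports access to any single position $\inputarr[\ell]$ in $\Oh(1)$ time, for any $k = o(\log_\sigma n)$. Hence the quantity $\mathcal{S}(n) = nH_k + o(n\lg\sigma)$ is admissible as the ``$\mathcal{S}(n)$'' in the statement of the preceding theorem.

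Next I would simply plug this in. Building the $\Oh(n\lg(1/\tau))$-bit positional encoding of Theorem~\ref{thm:upper-bound} alongside the compressed string above gives a data structure of total size $nH_k + o(n\lg\sigma) + \Oh(n\lg(1/\tau))$ bits. To answer a range $\tau$-majority query on $[i,j]$, I would first invoke the positional encoding to obtain, in $\Oh(1/\tau)$ time, a position $\ell$ in $[i,j]$ for each distinct $\tau$-majority element; there are at most $\lfloor 1/\tau\rfloor$ such positions, one per distinct element, so no deduplication is needed. Then, for each returned position $\ell$, I would look up $\inputarr[\ell]$ in the compressed string in $\Oh(1)$ time, recovering the element identities. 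The whole procedure costs $\Oh(1/\tau)$ time.

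Since this is a black-box composition, there is no substantial obstacle: the only things to verify are that the cited $\Oh(1)$-time access bounds hold precisely under the restriction $k = o(\log_\sigma n)$, and that composing the two structures does not inflate the $o(n\lg\sigma)$ redundancy term. The latter is immediate because the two components are constructed independently of one another and their space bounds simply add. I would also note in passing that the $o(n\lg\sigma)$ term already absorbs any lower-order cost coming from how the compressed representation is navigated, so nothing further needs to be said about it.
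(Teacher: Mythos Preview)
Your proposal is correct and follows exactly the paper's approach: the corollary is stated immediately after the sentence ``using results for higher order entropy compression with $\Oh(1)$ access time~\cite{GN06,FV07} yields the following,'' so the paper's own proof is nothing more than the black-box instantiation of the preceding theorem with $\mathcal{S}(n)=nH_k+o(n\lg\sigma)$ that you spell out.
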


\paragraph{Lower Bound.} 
Recall the lower bound of $\Omega(n \lg (1/\tau))$ bits holds for any
encoding supporting range $\tau$-majority counting queries.  We
consider an easier problem that we call \emph{range $\tau$-majority
  decision queries}.  The query asks ``Is there at least one element
in the query range $\inputarr[i..j]$ which is a $\tau$-majority?''.
Since the previous lower bound does not rule out a better encoding for
these decision queries, it is natural to ask whether a better encoding
exists.  We prove the following:

\begin{theorem}
\label{thm:lower-bound}
Any data structure that can be used to represent an array
$\inputarr[1..n]$ and support $1/k$-majority decision queries, for any
integer $k \ge 2$, on any arbitrary query range $[i,j]$, requires $n
\lg \frac{k}{e}- \Theta(k^4\lg k)$ bits of space.
\end{theorem}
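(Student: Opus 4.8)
\section*{Proof proposal for Theorem~\ref{thm:lower-bound}}

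The plan is an incompressibility argument: construct a family $\mathcal{F}$ of arrays of length $n$ such that any two of them are distinguished by some range $1/k$-majority decision query, so that any structure answering such queries needs at least $\lceil \lg|\mathcal{F}|\rceil$ bits on one of them, and arrange that $\lg|\mathcal{F}| \ge n\lg\frac{k}{e} - \Theta(k^4\lg k)$. Every member of $\mathcal{F}$ will be a fixed \emph{calibration prefix} $C$ of length $\mathrm{poly}(k)$, followed by $m=\lfloor (n-|C|)/k\rfloor$ \emph{content blocks} $B_1,\dots,B_m$, each a permutation of $\{1,\dots,k\}$ (written $B_i=(\pi_i(1),\dots,\pi_i(k))$), followed by fixed padding up to length $n$. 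Then $|\mathcal{F}|=(k!)^m$, and since $\lg k!=k\lg\frac{k}{e}+\Theta(\lg k)$ we get $\lg|\mathcal{F}| \ge \frac{n}{k}\lg k! - \Theta(\mathrm{poly}(k)\cdot\lg k) = n\lg\frac{k}{e}+\Theta\!\left(\tfrac{n\lg k}{k}\right)-\Theta(k^4\lg k)\ge n\lg\frac{k}{e}-\Theta(k^4\lg k)$. So the whole task reduces to showing that the sequence $\pi_1,\dots,\pi_m$ can be read off from the decision-query answers.

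The role of $C$ is to make decision queries --- which only reveal one bit, namely whether \emph{some} element reaches the frequency threshold $\lceil (j-i+1)/k\rceil$ --- powerful enough to extract the $\pi_i$. I will build $C$ over $\{1,\dots,k\}$ together with $\Theta(k^2)$ pairwise-distinct \emph{junk} symbols, each occurring only once in $C$, so that for each $s\in\{1,\dots,k\}$ there is a suffix $C_s$ of $C$ in which: $s$ occurs exactly one more time than every other symbol of $\{1,\dots,k\}$, exactly $k-1$ junk symbols occur, and $|C_s|$ is a multiple of $k$. A short explicit $C$ works: reading right to left, place $k-1$ junk symbols, then $s_1$, then a permutation of $\{1,\dots,k\}\setminus\{s_1\}$, then $s_2$, then a permutation of $\{1,\dots,k\}\setminus\{s_2\}$, and so on through all of $\{1,\dots,k\}$; then $|C|=\Theta(k^2)$ and $C_{s_t}$ is the suffix of length $tk$. (The statement allows the weaker $\Theta(k^4)$ bound, so there is slack.)

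With $C$ fixed, $\pi_i$ is recovered one symbol at a time. Fix $s\in\{1,\dots,k\}$ and $c\in\{1,\dots,k-1\}$ and issue the decision query on the contiguous range formed by the suffix $C_s$, then $B_1\cdots B_{i-1}$, then the length-$c$ prefix of $B_i$; write $q$ for the number of occurrences of $s$ in $C_s$. The range has size $|C_s|+(i-1)k+c$; since $|C_s|$ is a multiple of $k$ with $|C_s|/k=q$ and $1\le c\le k-1$, the threshold equals $\lceil(|C_s|+(i-1)k+c)/k\rceil = q+(i-1)+1$. In the range, $s$ occurs $q+(i-1)+[\pi_i^{-1}(s)\le c]$ times, so $s$ is a majority iff $\pi_i^{-1}(s)\le c$; every other symbol $v\in\{1,\dots,k\}$ occurs in $C_s$ exactly $q-1$ times, hence at most $(q-1)+(i-1)+1=q+(i-1)$ times in the whole range, strictly below the threshold; and each junk symbol occurs at most once, so it cannot be a majority either. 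Therefore the query answers ``yes'' exactly when $\pi_i^{-1}(s)\le c$. Scanning $c$ from $1$ to $k-1$ pins down $\pi_i^{-1}(s)$, hence (over all $s$) $\pi_i$, hence the entire content of the array --- so distinct members of $\mathcal{F}$ disagree on one of these queries, which completes the proof.

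The delicate part, and the expected main obstacle, is constructing and verifying $C$: the calibration must give the target symbol a lead of exactly one, contribute only once-occurring symbols otherwise, and make each reconstruction query's length fall in the residue class modulo $k$ that forces the threshold to be ``lead $+$ baseline $+\,1$'', and all of this must hold simultaneously for every target symbol out of a single fixed prefix and for every block index $i$. The reason the recovery is uniform in $i$ is that the $(i-1)k$ positions of $B_1\cdots B_{i-1}$ form a multiple of $k$ and each symbol of $\{1,\dots,k\}$ occurs there exactly $i-1$ times, so they shift the threshold and every relevant count in lockstep; checking that this, together with the suffixes $C_s$, lands the arithmetic correctly for all $s$, $c$, and $i$ is the only subtle step.
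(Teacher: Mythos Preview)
Your proposal is correct and follows the same high-level strategy as the paper: build a family of arrays consisting of a fixed calibration prefix followed by $m$ arbitrary permutations of $\{1,\dots,k\}$, show that decision queries recover every permutation, and conclude by counting $(k!)^m$.

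The difference lies in the calibration gadget and the query scheme. The paper concatenates $k$ gadgets $G(k,i)$ of length $\Theta(k^3)$ each (so $|L|=\Theta(k^4)$); in $G(k,i)$ every non-target symbol has density \emph{exactly} $1/k$, and the $k$ occurrences of $\alpha_i$ are spaced $k-1$ apart so that the recovery query $[\ell_{i,x},r_{j,x}]$ moves \emph{both} endpoints as $x$ increases, keeping the arithmetic balanced. Your gadget is different and cleaner: a single $\Theta(k^2)$-length prefix with nested suffixes $C_s$ in which the target symbol leads every other symbol by exactly one; the recovery query fixes the left endpoint at the start of $C_s$ and scans only the right endpoint through $B_i$. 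Because $|C_s|$ and $(i-1)k$ are multiples of $k$ while $1\le c\le k-1$, the threshold is exactly one above the baseline, so the decision bit isolates $[\pi_i^{-1}(s)\le c]$. Your construction in fact yields the stronger error term $\Theta(k^2\lg k)$ rather than $\Theta(k^4\lg k)$, though you do not claim it. The paper's scheme buys nothing extra here; your version is a genuine simplification of the same idea.
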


\noindent Thus, we answer this question in the negative by showing a lower bound
of $\Omega(n \lg(1/\tau))$ bits for any encoding that supports these
queries, which proves our structure is space-optimal for even these
restricted types of queries.  Moreover, we note that our lower bound
has an improved constant factor compared to the previous lower bound.

\paragraph{Related Work.}
Finally, we remark that the area of range queries on arrays is quite
vast, and there are many interesting related types of queries that
have been studied in the both the non-encoding and encoding models; we
refer the reader to surveys on the topics~\cite{S13,R15}.  The most
closely related problem to the range $\tau$-majority problem is the
\emph{range mode problem}~\cite{CDLMW14}: given a query range $[i,j]$
return the most frequently appearing element in the range.  In
contrast with range $\tau$-majority, this type of query is
significantly less efficient, with the best $\Theta(n \lg n )$ bit
data structures having $\Oh(\sqrt{n/\lg n})$ query time.

\section{Preliminaries}


\begin{lemma}[\cite{P08}]
\label{lem:rank-select}
Let $\bitvector$ be a bit vector of length $n$ bits in which $m$ of
the bits are set to one. There is a data structure for representing
$\bitvector$ that uses $m\lg(n/m) + \Oh(n/\lg^c(n))$ bits for any
constant $c \ge 1$ such that the following queries can be answered in
$\Oh(1)$ time:
\begin{itemize}
\item $\access(\bitvector,i)$ returns bit $\bitvector[i]$.
\item $\rank(\bitvector,i)$: returns the number of ones in the prefix
  $\bitvector[1..i]$.
\item $\select(\bitvector,j)$: returns the index of the
  $j$-th one in $\bitvector$, if it exists, and $-1$
  otherwise. In other words, the inverse of the rank operation: if
  $\select(\bitvector,j) = i$, then $\rank(\bitvector,i) = j$.
\end{itemize}
\end{lemma}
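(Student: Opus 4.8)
The plan is to recall the construction behind the ``Succincter'' data structure of~\cite{P08}, which is what yields this lemma. At a high level, partition $\bitvector$ into consecutive blocks of $b$ bits, with $b \le \tfrac13\lg n$ chosen small enough that any table indexed by a block (viewed as a $b$-bit integer) together with an auxiliary index of $\Oh(\lg n)$ bits occupies $n^{o(1)} = o(n/\lg^c n)$ bits and can be consulted in $\Oh(1)$ time. A naive approach would encode each block in $\lceil\lg\binom{b}{p}\rceil$ bits, where $p$ is its popcount, and separately store the sequence of popcounts together with sampled prefix sums of ones; but rounding each block up to a whole number of bits already wastes $\Theta(n/\lg n)$ bits, which meets the target only for $c=1$, and storing the popcount sequence explicitly is even more expensive.

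The key idea is to avoid ever paying a full-bit rounding cost per block. Following~\cite{P08}, arrange the blocks as the leaves of a balanced tree with $\Theta(c)$ levels, and represent each node by a \emph{spill-over encoding}: the node emits a whole number of bits explicitly together with a small ``carry'' drawn from a bounded range, and this carry is absorbed by the node's parent, which combines the carries of all of its children with its own data before producing its own explicit bits and carry. Thus the fractional-bit leftovers propagate upward and are rounded only once, at the root. Keeping every carry small enough --- within $\Oh(\lg n)$ bits, so that it can be manipulated by table lookups in $\Oh(1)$ time --- forces a small fraction of spill to be flushed into explicit bits at each of the $\Theta(c)$ levels; balancing the block size against this loss yields $m\lg(n/m) + \Oh(n/\lg^c n)$ bits in total. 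Crucially, the prefix-sum-of-ones information is threaded through the same recursion: each internal node additionally records the popcount of its whole subtree, again via the spill-over mechanism, so that a single root-to-leaf descent can accumulate the number of ones lying strictly to the left of any target block.

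Given this structure, $\access(\bitvector,i)$ walks from the root toward the block containing position $i$, at each node combining the value passed down from the parent with the bits stored at that node to recover the value handed to the relevant child, finally reaching the block contents and reading bit $i$ with one table lookup; since the tree has $\Oh(c) = \Oh(1)$ levels this takes $\Oh(1)$ time. For $\rank(\bitvector,i)$ the same descent additionally accumulates the subtree popcounts hanging to the left at each branching, and a final table lookup contributes the within-block rank. For $\select(\bitvector,j)$ we instead guide the descent by comparing $j$ against the running count of ones, at each node choosing the child whose subtree contains the $j$-th one, and again finish with one table lookup inside the leaf. The main obstacle --- and the technical heart of~\cite{P08} --- is the second step: driving the redundancy down to $\Oh(n/\lg^c n)$ for \emph{every} constant $c$, rather than the $\Oh(n\lg\lg n/\lg n)$ that a direct Raman--Raman--Rao-style encoding gives; this is precisely what the spill-over (``aB-tree'') bookkeeping is engineered to achieve.
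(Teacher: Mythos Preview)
The paper does not prove this lemma at all: it is stated in the Preliminaries section as a black box imported from~\cite{P08}, with no accompanying argument. There is therefore nothing in the paper to compare your proposal against.

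That said, as a self-contained sketch of P\v{a}tra\c{s}cu's construction your outline is broadly accurate: the aB-tree with spill-over encoding is indeed the mechanism that drives the redundancy from the $\Oh(n\lg\lg n/\lg n)$ of Raman--Raman--Rao down to $\Oh(n/\lg^{c} n)$ for any constant $c$, and the root-to-leaf descent with accumulated subtree popcounts is how $\rank$ and $\select$ are supported in $\Oh(1)$ time. For the purposes of this paper, however, a one-line citation suffices; if you wish to include a sketch, it would be appropriate to frame it explicitly as background on~\cite{P08} rather than as a proof belonging to the present work.
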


\noindent Since our proof makes heavy use of this lemma, we
distinguish the $m\lg(n/m)$ term in the space bound by calling it the
\emph{leading term}, and the other term the \emph{redundancy}.  If we
do not need the full power of rank, then we can use the following
lemma to reduce the redundancy:
\begin{lemma}[\cite{RRR07}]
\label{lem:sel-only}
If only the constant time select and access operations are required,
then we can represent $\bitvector$ using $m \lg(n/m) + o(m) + \Oh(\lg
\lg n)$ bits.
\end{lemma}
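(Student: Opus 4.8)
The plan is to reduce the redundancy of the rank/select structure by giving up the ability to answer rank queries in general position. The standard construction behind Lemma~\ref{lem:rank-select} stores the positions of the $m$ ones either as gaps encoded in a compressed form or via a two-level directory that, for each block of the bit vector, records a rank sample; these rank samples are precisely what forces the $\Oh(n/\lg^c n)$ additive term. The idea of~\cite{RRR07} is that if we only ever need to \emph{locate} the $j$-th one, and to recover an individual bit $\bitvector[i]$, then it suffices to store (i) a sorted list of the $m$ positions $p_1 < p_2 < \cdots < p_m$ in a way that supports constant-time random access to $p_j$, and (ii) a predecessor-type structure that is only accurate enough to support $\access$, not full rank.

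First I would encode the sequence of positions $p_1,\dots,p_m$ as the sequence of gaps $g_j = p_j - p_{j-1}$ (with $p_0 = 0$), and store these using a combination of a concatenated variable-length ($\gamma$- or $\delta$-style, or simply $\lfloor\lg\rfloor$-bit) encoding of the gaps together with a sparse bit vector marking codeword boundaries; by convexity ($\sum_j \lg g_j$ is maximized when the gaps are equal), the total length of the leading part is $m\lg(n/m) + \Oh(m)$ bits, and the codeword-boundary bit vector contributes $o(m)$ plus the unavoidable $\Oh(\lg\lg n)$ overhead for storing $\Oh(\lg\lg n)$-bit block lengths. Then $\select(\bitvector,j)$ is answered by a partial-sum / prefix-sum query over the gap array, which can be made $\Oh(1)$ by standard sampling of every $\Theta(w)$-th prefix sum inside $w$-bit words, again adding only $o(m)$ bits. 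Finally, $\access(\bitvector,i)$ is answered by a predecessor search among the $p_j$: since we only need to know whether some $p_j$ equals $i$ (not how many are $\le i$), we can afford a cruder index — for instance, a van~Emde~Boas-like or fusion-tree-like layer over $\Theta(\lg\lg n)$-bit buckets — whose size is again subsumed in the $o(m) + \Oh(\lg\lg n)$ budget.

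The step I expect to be the main obstacle is pinning down the additive terms precisely: showing that the codeword-boundary structure, the prefix-sum samples, and the predecessor index each genuinely fit in $o(m)$ bits (plus the single $\Oh(\lg\lg n)$ term for bookkeeping constants), rather than reintroducing an $\Omega(n/\polylog n)$ redundancy through a two-level block directory. The key point is that, unlike for rank, we never need to count ones in an \emph{arbitrary} prefix $\bitvector[1..i]$ for non-one position $i$, so no dense per-block rank sampling over the length-$n$ universe is required — all sampling is over the $m$ ones, which is why the redundancy scales with $m$ rather than $n$. Once this is established, $\select$, $\access$, and the claimed space bound follow directly from the gap-encoding-plus-prefix-sum machinery; the details are in~\cite{RRR07}.
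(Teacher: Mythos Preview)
The paper does not prove this lemma at all: it is stated in the preliminaries section with a citation to~\cite{RRR07} and then used as a black box throughout. There is therefore no ``paper's own proof'' to compare your attempt against.

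As for your sketch on its own merits, the high-level intuition --- that dropping general $\rank$ lets all auxiliary sampling be indexed by the $m$ ones rather than by the length-$n$ universe, which is why the redundancy scales with $m$ and not $n$ --- is exactly right. Two technical points, however, would need tightening. First, plain gap encoding with $\gamma$-, $\delta$-, or $\lfloor\lg\rfloor$-bit codewords yields a leading term of $m\lg(n/m) + \Theta(m)$, not $m\lg(n/m) + o(m)$: the per-gap rounding and code prefix contribute $\Theta(1)$ bits each. The actual construction in~\cite{RRR07} uses a block decomposition with enumerative (binomial) coding of each block together with table lookup, which is what gets the redundancy down to $o(m)$ above $\lceil\lg\binom{n}{m}\rceil$. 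Second, handling $\access$ via a generic predecessor or fusion-tree layer is more than is needed, and it is not immediate that such a layer fits in $o(m)$ bits with $\Oh(1)$ query time; in the indexable-dictionary framework, membership (which is all $\access$ requires here) falls out of the same bucket structure that supports $\select$, without a separate predecessor index. None of this affects the present paper, which simply invokes the lemma.
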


A useful fact about applying these previous Lemmas to bit vectors is
that concatenation is often helpful: if we apply either Lemma to two
bit vectors separately, both of length $n$ containing at least $m$
bits, then the sum of the leading terms is no more than $2m \lg
(n/m)$.  If we concatenate the bit vectors before applying the lemma,
the upper bound on the leading term is the same.

\section{Upper Bound}

\subsection{Quadruple Decomposition}

The upper bound makes use of the \emph{quadruple decomposition} of
Durocher et al.~\cite{DHMNS13}.  For ease of description, we assume
that $n$ is a power of $2$, but note that decomposition works in
general.  First, at a conceptual level we build a balanced binary tree
over the array $\inputarr[1..n]$.  Each leaf represents an element
$\inputarr[i]$.  On the $k$-th level of the tree $\tree(k)$, counting
from the leaves at level $0$, the nodes represent a partition of
$\inputarr[1..n]$ into $n_k = n / 2^k$ contiguous \emph{blocks} of
length $2^k$.  Second, consider all levels containing at least four
blocks.  At each such level, consider the blocks $\block_1, \ldots,
\block_{n_k}$.  We create a list of \emph{quadruples} (i.e., groups of
four consecutive blocks) at each such level:
\begin{align*}
\quadrupledecomp(k) = [(\block_1, \block_2, \block_3, \block_4), & (\block_3, \block_4, \block_5, \block_6), \ldots, (\block_{n_k-1}, \block_{n_k}, \block_1, \block_2)].
\end{align*}
\noindent
Thus, each index in $\inputarr$ is contained in exactly two quadruples
at each level, and there is one quadruple that wraps-around to handle
corner cases.  The quadruples are staggered at an offset of two blocks
from each other.  Moreover, given a quadruple $\quadruple =
(\block_{2\ell+1},\block_{2\ell+2},\block_{2\ell+3},\block_{2\ell+4})$,
the two middle blocks $\block_{2\ell+2}$ and $\block_{2\ell+3}$ are
not siblings in the binary tree $\tree$.  We call the range spanned by
these two middle blocks the \emph{middle part} of $\quadruple$.

As observed by Durocher et al.~\cite{DHMNS13}, for every query range
$[i,j]$ there exists a unique level $k$ in the tree such that $[i,j]$
contains at least one and at most two consecutive blocks in
$\tree(k)$, and, if $[i,j]$ contains two blocks, then the nodes
representing these blocks are not siblings in the tree $\tree$.  Thus,
based on our quadruple decomposition, for every query range $[i,j]$ we
can \emph{associate} it with exactly one quadruple $\quadruple =
(\block_{2\ell+1},\block_{2\ell+2},\block_{2\ell+3},\block_{2\ell+4})$ such
that
\[\left(\left( \block_{2\ell+2} \subseteq [i,j]\right) \lor \left( \block_{2\ell+3} \subseteq [i,j]\right)\right) \land \left( [i,j] \subset \block_{2\ell+1} \cup \block_{2\ell+2} \cup \block_{2\ell+3} \cup \block_{2\ell+4}.\right)\]
Moreover, Durocher et al.~\cite{DHMNS13} proved the following lemma:

\begin{lemma}[\cite{DHMNS13}]\label{lem:find-level}
For each query range $[i,j]$, in $\Oh(1)$ time we can compute the level
$k$, as well as the offset of the quadruple associated with $[i,j]$ in
the list $\quadrupledecomp(k)$, using $o(n)$ bits of space.  
\end{lemma}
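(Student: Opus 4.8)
The plan is to reduce the whole computation to $\Oh(1)$ arithmetic operations on $\Oh(\lg n)$-bit integers together with a single most-significant-bit computation; the latter is the only place space is needed, and it can be supported either as a word-RAM primitive or by a universal lookup table of $\Oh(\sqrt{n}\,\lg\lg n) = o(n)$ bits that returns $\lfloor\lg x\rfloor$ for any $\Oh(\lg n)$-bit value $x$.

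First I would pin down the level $k$ using the length $\ell = j-i+1$. Since a block of $\tree(k)$ has length $2^k$, the number of blocks of $\tree(k)$ fully contained in $[i,j]$ is within $1$ of $\ell/2^k$, no matter how $[i,j]$ is aligned; hence the unique level promised by the discussion preceding the lemma (a level with neither $0$ nor $\ge 3$ contained blocks) must be one of a constant number of consecutive levels around $\lfloor\lg\ell\rfloor$. So I would compute $\lfloor\lg\ell\rfloor$ with the MSB operation and then test each candidate level $k$ in this small window.

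The test at a fixed level $k$ is pure arithmetic: $\block_m\subseteq[i,j]$ exactly when $(m-1)2^k+1\ge i$ and $m2^k\le j$, so the indices $m$ of the contained level-$k$ blocks form an interval $[b,b']$ whose endpoints I obtain by rounding $i$ up and $j+1$ down to multiples of $2^k$ (shifts and additions). I would accept the candidate $k$ for which this interval is nonempty, has at most two elements, and, when it has exactly two, satisfies the non-sibling condition, which by the structure of $\tree$ is just the parity condition that $b$ is even (because $\block_{2m-1}$ and $\block_{2m}$ are siblings). By the uniqueness statement already established, exactly one candidate passes, and all of this costs $\Oh(1)$ time.

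Finally, from the first contained block index $b$ I would read off the offset $t$ of the associated quadruple in $\quadrupledecomp(k)$. The $t$-th quadruple is $(\block_{2t-1},\block_{2t},\block_{2t+1},\block_{2t+2})$ with middle part $\block_{2t}\cup\block_{2t+1}$, indices taken cyclically on $[1,n_k]$, which is exactly the one wrap-around quadruple $t=n_k/2$. If two blocks are contained, then $b$ is even, say $b=2s$, and the middle part of the quadruple $t=s$ is precisely $\block_{2s}\cup\block_{2s+1}$; if one block is contained, then $t=\lfloor b/2\rfloor$ puts $\block_b\in\block_{2t}\cup\block_{2t+1}$ (as $b\in\{2t,2t+1\}$), with the lone special case $b=1$ routed to the wrap-around quadruple $t=n_k/2$. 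In each case one division yields $t$, and a direct check shows $[i,j]$ lies inside that quadruple's span, i.e.\ this is indeed the associated quadruple from before the lemma. I expect the only delicate parts to be getting the floor/ceiling block-index arithmetic exactly right and correctly routing the wrap-around corner case; everything else is bookkeeping.
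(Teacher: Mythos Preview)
The paper does not supply its own proof of this lemma; it is quoted verbatim from Durocher et al.~\cite{DHMNS13} and used as a black box. There is therefore nothing in the present paper to compare your argument against.

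That said, your reconstruction is correct and is essentially how the cited result is obtained: the level is fixed by the query length via a most-significant-bit computation (realisable in $\Oh(1)$ time either as a word-RAM primitive or through an $o(n)$-bit lookup table), a constant number of candidate levels are tested by rounding $i$ and $j$ to block boundaries, the non-sibling condition reduces to a parity check on the leftmost contained block, and the quadruple offset is recovered by a single division with the wrap-around case handled separately. Your verification that the resulting quadruple actually contains $[i,j]$ is also sound. Nothing substantive is missing.
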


Furthermore, if we consider any arbitrary query range $[i,j]$ that is
associated with a quadruple $\quadruple$, there are at most $4/\tau$
elements in the range represented by $\quadruple$ that could be
$\tau$-majorities for the query range.  Following Durocher et al., we
refer to these elements as \emph{candidates} for the quadruple
$\quadruple$.

For each quadruple, we compute and store all of its candidates, so
that, by Lemma~\ref{lem:find-level}, in $\Oh(1)$ time we can obtain
$\Oh(1/\tau)$ candidates. It remains to show how to verify that a
candidate is in fact a $\tau$-majority in $\inputarr[i..j]$.  At this point,
our approach deviates from Durocher et al.~\cite{DHMNS13}, who
make use of a wavelet tree for verification, and end up with a space
bound of $\Oh(n \lg n \lg(1/\tau))$ bits.

Consider such a candidate $y$ for quadruple
$\quadruple=(\block_{2\ell+1},\ldots,\block_{2\ell+4})$.  Our goal is
to count the number of occurrences of $y$ in the query range $[i,j]$.
To do this we store a bit vector $\bitvector(\quadruple,y)$, that
represents the (slightly extended) range $\block_{2\ell} \cup \ldots
\cup \block_{2\ell+5}$ and marks all occurrences of $y$ in this range
with a one bit.  By counting the number of ones in the range
corresponding to $[i,j]$ in $\bitvector(\quadruple,y)$, we can
determine if the number of occurrences exceeds the threshold
$\tau(j-i+1)$.  If the threshold is not exceeded, then we can return
the first one bit in the range, as that position in $\inputarr$
contains element $y$. Note that we have extended the range of the bit
vector beyond the range covered by $\quadruple$ by one extra block to
the left and right.  We call this extended range the \emph{extent} of
$\quadruple$, and we make the following observation (clearly visible
in Appendix~\ref{sec:figures}).

\begin{observation}
\label{obs:extent}
Let $E(\quadruple)$ be the extent of quadruple $\quadruple$ at level
$k$.  Then for all quadruples $\quadruple'$ at level $k' < k$ such
that the range of $\quadruple'$ has non-empty intersection with the
range of $\quadruple$, we have that $\quadruple' \subset
E(\quadruple)$.
\end{observation}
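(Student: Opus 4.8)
To prove Observation~\ref{obs:extent} the plan is to reason purely about how level-$k'$ blocks sit inside level-$k$ blocks. Put $t = 2^{\,k-k'}$; since $k' < k$ this is a power of two with $t \ge 2$, and each level-$k$ block is the concatenation of exactly $t$ consecutive level-$k'$ blocks. Say $\quadruple$ occupies the level-$k$ blocks $p, p+1, p+2, p+3$; re-expressed in level-$k'$ coordinates its range is the stretch of level-$k'$ blocks $[(p-1)t+1,\ (p+3)t]$, so (prepending and appending one level-$k$ block) its extent $E(\quadruple)$ is the stretch $[(p-2)t+1,\ (p+4)t]$, clipped to $\inputarr$ near the array ends. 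The one structural fact I will use is that in each list $\quadrupledecomp(\cdot)$ the quadruples begin at blocks of \emph{odd} index; in particular $\quadruple'$ occupies level-$k'$ blocks $s, s+1, s+2, s+3$ with $s$ odd.

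I would then translate the hypothesis: the range of $\quadruple'$ meets the range of $\quadruple$ exactly when $s \le (p+3)t$ and $s+3 \ge (p-1)t+1$, i.e.
\[
(p-1)t-2 \ \le\ s \ \le\ (p+3)t ,
\]
and the goal reduces to showing $(p-2)t+1 \le s$ and $s+3 \le (p+4)t$, which together say $[s,s+3]\subseteq E(\quadruple)$. For $t \ge 4$ this is immediate from the displayed bounds, since $(p-1)t-2 = (p-2)t+(t-2) \ge (p-2)t+1$ and $s+3 \le (p+3)t+3 \le (p+4)t$.

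The tight case --- and the step I expect to be the real obstacle --- is $t = 2$, that is $k' = k-1$, where a level-$k'$ quadruple is exactly two level-$k$ blocks wide and all the length slack vanishes: the bound above yields only $2p-4 \le s \le 2p+6$, one level-$k'$ block short on each side of $E(\quadruple) = [2p-3,\ 2p+8]$. This is precisely where the staggering of the quadruples pays off: since $s$ is odd it is neither $2p-4$ nor $2p+6$, hence $s \ge 2p-3$ and $s \le 2p+5$, so $s+3 \le 2p+8$ and $[s,s+3] \subseteq E(\quadruple)$ after all. This gives the containment in every case, and it is proper since $E(\quadruple)$ spans $6\cdot 2^k$ positions whereas $\quadruple'$ spans at most $2\cdot 2^k$. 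The remaining loose end is clipping at the ends of $\inputarr$: if $\quadruple$ lies so close to a boundary that $E(\quadruple)$ cannot extend a full block past it --- the situation the wrap-around quadruple exists to absorb --- then $\quadruple'$ cannot extend past that boundary either, so containment on that side is trivial, while the interior estimates above handle the other side. Apart from the $t=2$ off-by-one and this small boundary bookkeeping, everything is a one-line block count.
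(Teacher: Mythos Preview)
Your argument is correct. The paper does not actually prove this observation: it simply states it and points the reader to a figure in the appendix, remarking that the containment is ``clearly visible'' there. Your block-counting argument is therefore considerably more detailed than anything the paper provides. In particular, your isolation of the tight case $t=2$ (i.e., $k'=k-1$) and the use of the odd starting index of every quadruple to rule out the two extremal values of $s$ is exactly the right observation --- it is the one place where the staggering of the decomposition is essential, and it is not spelled out in the paper. The $t\ge 4$ case and the strictness remark are routine, and your boundary/wrap-around comment is adequate for what is, after all, stated only as an observation.
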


We now briefly analyze the total space of this method, under the
assumption that we can store a bit vector of length $n$ with $m$ one
bits using $\Oh(m \lg(n/m))$ bits.  This crude analysis is merely to
illustrate that additional tricks are needed to achieve optimal
space. The quadruple decomposition consists of $\lg n$ levels. On each
level, we store a number of bit vectors. For each quadruple we have up
to $\Oh(1/\tau)$ candidates $\mathcal{Y}$.  Thus, if $f_y$ represents
the frequency of candidate $y$ in extent of quadruple $\quadruple$,
then the space bound, for each quadruple at level $k$, is
$\sum_{y\in\mathcal{Y}}\Oh(f_c\lg(2^k/f_c)) $, which, by the concave
version of Jensen's inequality, is bounded by $\Oh((2^k)\lg(1/\tau))$.
So each level uses $\Oh(n \lg(1/\tau))$ bits, for a total of $\Oh(n
\lg n \lg(1/\tau))$ bits over all levels.

\subsection{\label{sec:low-space}Optimal Space with Suboptimal Query Time}

To achieve space $\Oh(n \lg (1/\tau))$ bits, the intuition is that we
should avoid duplicating the same bit vectors between levels.
It is easy to imagine a case where element $y$ is a candidate at every
level and in every quadruple of the decomposition, which results in
many duplicated bit vectors.  To avoid this duplication problem, we
propose a top-down algorithm for pruning the bit vectors.  Initially,
all indices in $\inputarr$ are \emph{active} at the beginning.  Our
goal is to charge at most $\Oh(\lg(1/\tau))$ bits to each active index
in $\inputarr$, which achieves the desired space bound.

Let $k$ be the current level of the quadruple decomposition, as
proceed top-down. We maintain the invariant that for any element $y$
in a block $\block_i$, either all indices storing occurrences of $y$
are active in $\block_i$ (in which case we say $y$ is active in
$\block_i$), or none are (in which case we say $y$ is inactive in
$\block_i$).  Consider a candidate $y$ associated with quadruple
$\quadruple = (\block_{2\ell +1}, \block_{2\ell + 2},
  \block_{2\ell+3}, \block_{2\ell+4})$. Then:
  
\begin{enumerate}
\item If $y$ is active in blocks $\block_{2\ell +1}, \ldots,
  \block_{2\ell +4}$, then we store the bit vector
  $\bitvector(\quadruple,y)$, and (conceptually) mark all occurrences
  of $y$ inactive in these blocks \emph{after} we finish processing
  level $k$.  This makes $y$ inactive in all blocks contained in
  $\quadruple$ at lower levels.  Since a block $\block_i$ is contained
  in two quadruples at level $k$, a position storing $y$ in $\block_i$
  may be made inactive for two reasons: this is why we mark positions
  inactive after processing all quadruples at level $k$.
\item If $y$ is inactive in some block $\block_i \subset \quadruple$,
  then it is the case that we have computed and stored the bit vector
  $\bitvector(\quadruple',y)$ for some quadruple $\quadruple'$ at
  level $k'>k$, such that $\quadruple \cap \quadruple' \neq 0$.
  Therefore, Observation~\ref{obs:extent} implies that $\quadruple$ is
  contained in the extent of $\quadruple'$, and thus the bit vector
  associated with $\quadruple'$ can be used to answer queries for
  $\quadruple$.  For $\quadruple$ we need not to store
  $\bitvector(\quadruple,y)$, though for now we do not address how to
  efficiently answer these queries.
\end{enumerate}

Next we analyse the total cost of the bit vectors that we stored
during the top-down construction.  The high level idea is that we can
charge the cost of bit vector $\bitvector(\quadruple,y)$ to the
indices in $\quadruple$ that store occurrences of $y$.  Call these the
indices the \emph{sponsors} of $\bitvector(\quadruple,y)$.  Since $y$
is a $\tau$-majority, it occurs at least $\Oh(\tau \cdot 2^k)$ times in
$\quadruple$, which has length $\Oh(2^k)$.  Thus, we can expect to
charge $\Oh(\lg(1/\tau))$ bits to each sponsor: the expected gap
between one bits is $\Oh(1/\tau)$ and therefore can be recorded using
$\Oh(\lg(1/\tau))$ bits.  There are some minor technicalities that
must be addressed, but this basic idea leads to the following
intermediate result, in which we don't concern ourselves with the
query time:

\begin{lemma}
\label{lem:encoding-structure}
There is an encoding of size $\Oh(n \lg(1/\tau))$ bits such that the
answer to all range $\tau$-majority position queries can be recovered.
\end{lemma}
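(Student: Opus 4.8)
The plan is to make rigorous the charging argument sketched in the text. The key quantity to control is the total length of the leading terms of all stored bit vectors $\bitvector(\quadruple,y)$, summed over all levels $k$, all quadruples $\quadruple$ at level $k$, and all candidates $y$ that are active in all four blocks of $\quadruple$. By Lemma~\ref{lem:rank-select}, the leading term for $\bitvector(\quadruple,y)$ is $f_y \lg(|E(\quadruple)|/f_y)$ where $f_y$ is the number of occurrences of $y$ in the extent $E(\quadruple)$, which has length $\Theta(2^k)$. Since $y$ is a candidate that is stored, it is a $\tau$-majority for \emph{some} query range associated with $\quadruple$, so $f_y = \Omega(\tau \cdot 2^k)$, whence $\lg(|E(\quadruple)|/f_y) = \Oh(\lg(1/\tau))$. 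First I would fix this bound and define the \emph{sponsors} of $\bitvector(\quadruple,y)$ to be the $f_y$ indices of $\inputarr$ inside $\quadruple$ (not the extent) that hold occurrences of $y$; these are active by the case-(1) hypothesis, and each costs $\Oh(\lg(1/\tau))$ bits of the leading term.

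Next I would argue that each active index is charged a bounded number of times across the whole construction. The crucial observation is the deactivation rule: when $\bitvector(\quadruple,y)$ is stored at level $k$, every occurrence of $y$ in the four blocks of $\quadruple$ becomes inactive before level $k-1$ is processed, and by the invariant this deactivation is "all-or-nothing" per block. Hence an index $p$ holding element $y$ can sponsor a stored bit vector only at the unique highest level $k$ at which $y$ is still active in the block of $\tree(k)$ containing $p$ and at which some quadruple through that block has all four of its blocks active in $y$. At that single level $k$, the index $p$ lies in at most two quadruples (the staggering gives exactly two), so it sponsors at most two stored bit vectors, paying $\Oh(\lg(1/\tau))$ bits to each. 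Summing over all $n$ indices gives $\Oh(n\lg(1/\tau))$ bits for all leading terms combined. For the redundancy I would concatenate, per level, all the stored bit vectors into a single bit vector (padding to length $n$) before applying Lemma~\ref{lem:rank-select}; the concatenation remark in the preliminaries ensures the leading terms do not increase, and the redundancy is $\Oh(n/\lg^c n)$ per level, hence $\Oh(n\lg n/\lg^c n) = o(n)$ over all $\lg n$ levels for $c \ge 2$. Finally, the candidate lists themselves: each quadruple stores $\Oh(1/\tau)$ candidate identifiers, but these can be stored as ranks within the quadruple's own candidate set combined with the global ordering induced by first occurrence, costing $\Oh(\lg(1/\tau))$ bits each; a parallel charging argument (or simply noting there are $\Oh(n/2^k)$ quadruples at level $k$, each paying $\Oh((1/\tau)\lg(1/\tau))$ bits, which is dominated when $2^k = \Omega(1/\tau)$ and handled separately at the bottom $\Oh(\lg(1/\tau))$ levels by the micro-array machinery) keeps this within $\Oh(n\lg(1/\tau))$.

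To see that the encoding suffices to recover every answer, I would observe that for any query $[i,j]$, Lemma~\ref{lem:find-level} identifies its associated quadruple $\quadruple$ at level $k$, and the candidate list of $\quadruple$ contains every $\tau$-majority of $[i,j]$. For each candidate $y$: if $y$ is active in all of $\quadruple$ we stored $\bitvector(\quadruple,y)$, which spans the extent of $\quadruple$ and hence all of $[i,j]$ (since $[i,j]$ is contained in the four blocks of $\quadruple$, which lie inside the extent), so $\rank$ over $[i,j]$ gives the exact frequency and $\select$ gives a witness position. If $y$ is inactive in some block of $\quadruple$, case (2) guarantees a stored $\bitvector(\quadruple',y)$ at a higher level $k'>k$ with $\quadruple\cap\quadruple'\neq\emptyset$, and Observation~\ref{obs:extent} gives $\quadruple \subset E(\quadruple')$, so that bit vector also covers $[i,j]$ and answers the query. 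Thus all information is present; the only thing deliberately deferred is \emph{how to locate} the relevant $\quadruple'$ quickly, which is irrelevant for an existence-of-encoding statement.

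\textbf{The main obstacle} I anticipate is making the "charged at most twice" claim airtight despite the two sources of deactivation: a block belongs to two quadruples at level $k$, and $y$ might be eligible to be stored for both of them, or for neither (because in one of them a sibling-side block is already inactive). The clean formulation is to charge at the level where deactivation \emph{happens}, i.e. to say index $p$ (holding $y$) is charged only for bit vectors stored at the single level $k^\star(p)$ defined as the highest level where $p$'s block is still active in $y$; at every level above $k^\star(p)$ the index was already inactive and sponsors nothing, at level $k^\star(p)$ it sponsors at most the two quadruples through its block, and at levels below it is inactive again. Verifying that this $k^\star(p)$ is well-defined and that no sponsoring occurs outside it requires carefully tracking the all-or-nothing block invariant through the top-down pass, but it is mechanical once the invariant is stated precisely. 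A secondary, lesser obstacle is bounding the candidate-list storage without the micro-array tools that only appear later in the paper; the safe route in the proof of this lemma is to store candidate lists naively in $\Oh((1/\tau)\lg n)$ bits per quadruple for the top $\lg n - \Oh(\lg(1/\tau))$ levels (total $\Oh(n\lg n/(2^{\Theta(\lg(1/\tau))}\cdot\tau)) = \Oh(n)$ when summed, since the number of quadruples halves each level) and defer the bottom levels, or—cleaner—store each candidate by its $\Oh(\lg(1/\tau))$-bit rank among the candidates of its quadruple together with a single shared $o(n)$-bit structure mapping quadruple-local candidate ranks to identifiers via the already-stored bit vectors' first set positions.
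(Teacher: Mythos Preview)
Your charging argument has a real gap. You define $f_y$ as the number of occurrences of $y$ in the \emph{extent} $E(\quadruple)$, which is six blocks wide, and then declare the sponsors to be ``the $f_y$ indices of $\inputarr$ inside $\quadruple$''. But $\quadruple$ is only the four middle blocks; the number of occurrences of $y$ inside $\quadruple$, call it $g_y$, can be much smaller than $f_y$. Concretely, the two extension blocks may each be entirely filled with $y$, giving $f_y \approx 2\cdot 2^k$ while $g_y$ is merely the candidate threshold $\Theta(\tau\cdot 2^k)$, so $f_y/g_y=\Theta(1/\tau)$. Your per-sponsor charge then becomes $\Theta((1/\tau)\lg(1/\tau))$, and the final bound degrades to $\Oh((n/\tau)\lg(1/\tau))$. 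The paper repairs exactly this: it does a case split on whether at least one third of the $f_y$ occurrences lie inside $\quadruple$. If not, one of the two extension blocks holds $\ge f_y/3$ occurrences, which forces $y$ to be an \emph{active} candidate for the neighbouring quadruple $\quadruple'$ overlapping that block, and the bit vector's cost is charged to the $\Omega(f_y)$ sponsors sitting in $\quadruple'$. Each sponsor can then be hit at most three times at its single deactivation level (its own quadruple plus the two neighbours), giving the desired $\Oh(\lg(1/\tau))$ per index.

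A second, smaller issue: you concatenate per level and apply Lemma~\ref{lem:rank-select}, claiming the padded length is $n$ and the redundancy is $\Oh(n/\lg^c n)$ per level. The concatenated bit vector at level $k$ has length $(\text{number of stored candidates})\times 6\cdot 2^k$, which in the worst case is $\Theta(n/\tau)$, not $n$; summed over $\lg n$ levels the redundancy becomes $\Oh(n\lg n/(\tau\lg^c n))$, which is \emph{not} $o(n)$ for large $1/\tau$. The paper sidesteps this by using Lemma~\ref{lem:sel-only} here (redundancy $\Oh(\lg\lg n)$ per level, total $\Oh(\lg n\lg\lg n)$) and accepting $\Oh(\lg n)$-time rank via binary search on select; query time is irrelevant for the present lemma. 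Your discussion of storing candidate identifiers and invoking micro-array machinery is unnecessary for this lemma: the paper simply records, per level, a unary bit vector $L_k$ giving the number of stored candidates per quadruple, which suffices to navigate the concatenation and costs $\Oh(n\lg(1/\tau))$ total.
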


\begin{proof}
Consider candidate $y$ and its occurrences in extent $E(\quadruple)$
of quadruple $\quadruple$ at level $k$, for which we stored the bit
vector $\bitvector(\quadruple,y)$.  Suppose there are $f_y$
occurrences of $y$ in $E(\quadruple)$.  If at least one third of the
occurrences of $y$ are contained in $\quadruple$, then we charge the
cost of the bit vector to the (at least) $f_y/3$ sponsor indices in
$\quadruple$.  Otherwise, this implies one of the two blocks, call it
$\block_i$ such that $\block_i \subset E(\quadruple)$ but $\block_i
\not\subset \quadruple$ contains at least $f_y/3$ occurrences of $y$.
Therefore, $y$ must also be an active candidate for the unique
quadruple $\quadruple'$ that has non-empty intersection with both
$\block_i$ and $\quadruple$: this follows since $y$ occurs more times
in $\quadruple'$ than in $\quadruple$, and $y$ is a candidate for
$\quadruple$.  In this case we charge the cost of the bit vector to
the sponsor indices in \emph{neighbouring quadruple} $\quadruple'$.

Suppose we store the bit vectors using Lemma~\ref{lem:sel-only}: for
now ignore the $\Oh(\lg \lg n)$ term in the space bound as we
deal with it in the next paragraph.  Using Lemma~\ref{lem:sel-only},
the cost of the bit vector $\bitvector(\quadruple,y)$ associated with
$\quadruple$ is at most $\Oh(f_y\lg(1/\tau))$, since $y$ is a
$(\tau/4)$-majority in $\quadruple$.  Thus, $\Oh(f_y)$ sponsors in
$\quadruple$ pay for at most three bit vectors:
$\bitvector(\quadruple,y)$ and possibly the two other bit vectors that
cost $\Oh(f_y \lg (1/\tau))$ bits, charged by neighbouring quadruples.
Since this charge can only occur at one level in the decomposition
(the index becomes inactive at lower levels after the first charge
occurs), each sponsor is charged $\Oh(\lg(1/\tau))$, making the total
amount charged $\Oh(n \lg(1/\tau))$ bits overall.

To make answering queries actually possible, we make use of the same
technique used by Durocher et al.~\cite{DHMNS13}, which is to
concatenate the bit vectors at level $k$.  The candidates have some
implicit ordering in each quadruple, $[1, ..., \Oh(1/\tau)]$: the
ordering can in fact be arbitrary.  For each level $k$, we concatenate
the bit vectors associated with quadruple according to this implicit
ordering of the candidates.  Thus, since there are $\Oh(\lg n)$ bit
vectors (one per level), the $\Oh(\lg \lg n)$ term for
Lemma~\ref{lem:sel-only} contributes $\Oh(\lg n \lg \lg n)$ to the
overall space bound.

Given a query $[i,j]$, Lemma~\ref{lem:find-level} allows us to compute
the level $k$ and offset $\ell$ of the quadruple associated with
$[i,j]$.  Our goal is to remap $[i,j]$ to the relevant query range in
the concatenated bit vector at level $k$.  Since all bit vectors
$\bitvector(\quadruple, y)$ at level $k$ have the same length, we only
need to know how many bit vectors are stored for quadruples $1, ...,
\ell -1$: call this quantity $X$.  Thus, at level $k$ we construct and
store a bit vector $L_k$ of length $\Oh(n_k/\tau)$ in which we store
the number of bit vectors associated with the quadruples in unary.
So, if the first three quadruples have $2,6,4$ candidates
(respectively), we store $L_k = 1001000000100001\ldots$.  Overall, the
space for $L_k$ is $\Oh(n_k \lg(1/\tau))$, or $\Oh(n \lg(1/\tau))$
overall, if we represent each $L_k$ using
Lemma~\ref{lem:sel-only}.

Given an offset $\ell$, we can perform $\select(L_k,\ell) - \ell$ to
get $X$.  Once we have $X$, we can use the fact that all extents have
fixed length at a level in order to remap the query $[i,j]$ to the
appropriate range $[i',j']$ in the concatenated bit vector for each
candidate.  We can then use binary search and the select operation to
count the number of $1$ bits corresponding to each candidate in the
remapped range $[i',j']$ in $\Oh(\lg n)$ time per candidate.  Since
some of the candidates for the $\quadruple$ associated with $[i,j]$
may have been inactive, we also must compute the frequency of each
candidate in quadruples at higher levels that contain $\quadruple$.
Since there are $\Oh(\lg n)$ levels, $\Oh(1)$ quadruples that overlap
$\quadruple$ per level, and $\Oh(1/\tau)$ candidates per quadruple, we
can answer range $\tau$-majority position queries in $\Oh(\lg^2
n/\tau)$ time.  Note that we have to be careful to remove possible
duplicate candidates (at each level the quadruples that overlap
$\quadruple$ may share candidates).
\end{proof}

\subsection{\label{sec:fast-query}Optimal Space with Optimal Query Time}

In Lemma~\ref{lem:encoding-structure} there are two issues that make
querying inefficient: 1) we have to search for inactive candidates in
$\Oh(\lg n)$ levels; and 2) we used Lemma~\ref{lem:sel-only} which
does not support $\Oh(1)$ time rank queries.  The solutions to both of
these issues are straightforward.  For the first issue, we store
pointers to the appropriate bit vector at higher levels, allowing us
to access them in $\Oh(1)$ time.  For the second issue we can use
Lemma~\ref{lem:rank-select} to support rank in $\Oh(1)$ time.
However, both of these solutions raise their own technical issues
that we must resolve in this section.

\paragraph{Pointers to higher levels.} 
Consider a quadruple $\quadruple$ at level $k$ for which candidate $y$
is inactive in some block contained in $\quadruple$.  Recall that this
implies the existence of some bit vector $\bitvector(\quadruple',y)$
for some $\quadruple'$ at level $k'> k$ that can be used to count
occurrences of $y$ in $\quadruple$.  In order to access this bit
vector in $\Oh(1)$ time, the only information that we need to store is
the number $k'$ and also the offset of $y$ in the list of candidates
for $\quadruple'$: $\quadruple'$ might have a different ordering on
its candidates than $\quadruple$.  Thus, in this case we store
$\Oh(\lg\lg n + \lg(1/\tau))$ bits per quadruple as we have $\Oh(\lg
n)$ levels and $\Oh(1/\tau)$ candidates per quadruple.  This is a
problem, because there are $\Oh(n)$ quadruples, which means these
pointers can occupy $\Oh(n/\tau (\lg \lg n + \lg(1/\tau)))$ bits overall.

To deal with this problem, we simply reduce the number of quadruples
using a bottom-up pruning technique: all data associated with
quadruples spanning a range of size $Z$ or smaller is deleted.  This
is good as it limits the space for the pointers to at most
$\Oh(n(\lg\lg n + \lg(1/\tau))/(\tau \cdot Z))$ bits, as there are
$\Oh(n/Z)$ quadruples of length greater than $Z$.  However, we need to
come up with an alternative approach for queries associated with these
small quadruples.

The value we select for $Z$, as well as the strategy to handle queries
associated with quadruples of size $Z$ or smaller, depends on the
value of $1/\tau$:

\begin{enumerate}
\item If $1/\tau \ge \sqrt{\lg n}$: then we set $Z = 1/\tau$.  Thus,
  the pointers occupy $\Oh(n(\lg \lg n + \lg(1/\tau)) =
  \Oh(n\lg(1/\tau))$ bits (since $\lg(1/\tau) = \Omega(\lg \lg
  n)$). Consider the maximum level $k$ such that the quadruples are of
  size $Z$ or smaller.  For each quadruple $\quadruple$ in level $k$,
  we construct a new \emph{micro-array} of length $2^k$ by copying the
  range spanned by $\quadruple$ from $\inputarr$. Thus, any query
  $[i,j]$ associated with a quadruple at levels $k$ or lower can be
  reduced to a query on one of these micro-arrays.  Since the
  micro-arrays have length $1/\tau$, we preprocess the elements in the
  array by replacing them by their ranks (i.e., we reduce the elements
  to rank space).  Storing the micro-array therefore requires only
  $\Oh(n_k 2^k \lg(1/\tau)) = \Oh(n \lg(1/\tau))$ bits.  Moreover,
  since we have access to the ranks of the elements directly, we can
  answer any query on the micro-array directly by scanning it in
  $\Oh(1/\tau)$ time.  Thus, in this case, the space for the
  micro-arrays and pointers is $\Oh(n\lg(1/\tau))$.
\item If $1/\tau < \sqrt{\lg n}$: in this branch we use the encoding
  of Lemma~\ref{lem:encoding-structure} that occupies $c\cdot n\lg(1/\tau)$
  bits of space for an array of length $n$, for some constant $c \ge 1$.  We set $Z = \lg n /
  (2c \lg(1/\tau))$, so that the space for the pointers becomes:
\[ \Oh(n(\lg \lg n + \lg(1/\tau))\lg(1/\tau)/(\tau \cdot \lg n)) = \Oh(n (\lg \lg n)^2 / \sqrt{\lg n}) = o(n) .\]
  As in the previous case, we construct the micro-arrays for the
  appropriate quadruples based on the size $Z$.  However, this time we
  encode each micro-array using
  Lemma~\ref{lem:encoding-structure}. This gives us a set of $n_k$
  encodings, taking a total $\Oh(n_k 2^k\lg(1/\tau)) = \Oh(n
  \lg(1/\tau))$ bits.  Moreover, the answer to a query is fully
  determined by the encoding and the endpoints $i,j$.  Since $i$ and
  $j$ are fully contained in the micro-array, their description takes
  $\lg Z$ bits.  Thus, using an auxiliary lookup table of size
  $\Oh(2^{c\cdot Z \lg(1/\tau)}\times 2^{\lg^{2}Z})$ we can preprocess the
  answer for every possible encoding and positions $i,j$ so that a
  query takes $\Oh(1)$ time. Because $1/\tau < \sqrt{\lg n}$ the space for
  this lookup table is:
\[ \Oh(2^{c(\lg n/(2c \lg(1/\tau)))\lg(1/\tau) + \lg^2(\lg n / (2c \lg(1/\tau)))}) = \Oh(2^{\lg n / 2 + (\lg \lg n)^2}) = o(n). \]
\end{enumerate}

\noindent In summary, we can apply level-based pruning to reduce the
space required by the pointers to at most $\Oh(n \lg(1/\tau))$.  Note
that we must be able to quickly access the pointers associated with
each quadruple $\quadruple$.  To do this, we concatenate the pointers
at level $k$, and construct yet another bit vector $L'_k$ having a
similar format as $L_k$.  The bit vector $L'_k$ allows us to easily
determine how many pointers are stored for the quadruples to the left
of $\quadruple$ at the current level, as well as how many are stored
for $\quadruple$.  Thus, these additional bit vectors occupy $\Oh(n
\lg(1/\tau))$ bits of space, and allow accessing an arbitrary pointer
in $\Oh(1)$ time.

\paragraph{Using the faster ranking structure.} 
When we use the faster rank structure of Lemma~\ref{lem:rank-select},
we immediately get that we can verify the frequency of each candidate
in $\Oh(1)$ time, rather than $\Oh(\lg n)$ time.  Recall that the bit
vectors are concatenated at each level.  In the structure of
Lemma~\ref{lem:sel-only}, the redundancy at each level was merely
$\Oh(\lg \lg n)$ bits.  However, with Lemma~\ref{lem:rank-select} we
end up with a redundancy of $\Oh(n /(\tau \lg^c (n)))$ bits per level,
for a total of $\Oh(n \lg n/(\tau \lg^c (n)))$ bits.  So, if $1/\tau =
\Oh(\textup{polylog}(n))$, then we can choose the constant $c$ to be
sufficiently large so that this term is sublinear.  Immediately, this
yields:

\begin{lemma}
If $1/\tau = \Oh(\textup{polylog}(n))$, there is an encoding that
supports range $\tau$-majority position queries in $\Oh(1/\tau)$ time,
and occupies $\Oh(n \lg(1/\tau))$ bits.
\end{lemma}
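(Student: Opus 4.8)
The plan is to assemble the pieces already developed in this section. Recall that Lemma~\ref{lem:encoding-structure} gives the basic $\Oh(n\lg(1/\tau))$-bit encoding, and the preceding discussion in Section~\ref{sec:fast-query} shows how to (i) add pointers to higher levels so that inactive candidates can be resolved in $\Oh(1)$ time rather than by searching $\Oh(\lg n)$ levels, and (ii) prune all quadruples of span $Z$ or smaller, replacing them by micro-array structures, in such a way that the pointers cost only $\Oh(n\lg(1/\tau))$ bits and the micro-array structures cost $\Oh(n\lg(1/\tau))$ bits as well. The only remaining change for this lemma is to store the concatenated bit vectors at each level using the full Lemma~\ref{lem:rank-select} instead of Lemma~\ref{lem:sel-only}, so that each candidate's frequency in a remapped range can be computed with a single $\rank$ query in $\Oh(1)$ time.

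First I would verify the space. The leading terms of the bit vectors are unchanged by switching from Lemma~\ref{lem:sel-only} to Lemma~\ref{lem:rank-select}, so the charging argument of Lemma~\ref{lem:encoding-structure} still bounds their total leading cost by $\Oh(n\lg(1/\tau))$ bits; the redundancy now becomes $\Oh(n_k/(\tau\lg^c n))$ per level, where $n_k\le n$, for a total of $\Oh(n\lg n/(\tau\lg^c n)) = \Oh(n/(\tau \lg^{c-1} n))$ bits. Since $1/\tau = \Oh(\polylog(n))$, there is a constant $d$ with $1/\tau = \Oh(\lg^d n)$, so choosing $c > d+1$ makes this redundancy $o(n)$. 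The auxiliary structures $L_k$ and $L'_k$, the higher-level pointers, and the micro-arrays have already been bounded by $\Oh(n\lg(1/\tau))$ (or $o(n)$) bits in the discussion above — and crucially, since $1/\tau = \Oh(\polylog(n))$ we are always in the regime $1/\tau < \sqrt{\lg n}$ for $n$ large, or else $1/\tau \ge \sqrt{\lg n}$ but still $\polylog$, so both micro-array strategies apply and each yields $\Oh(n\lg(1/\tau))$ bits. Summing, the total is $\Oh(n\lg(1/\tau)) + o(n) = \Oh(n\lg(1/\tau))$ bits.

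Next I would verify the query time. Given $[i,j]$, Lemma~\ref{lem:find-level} gives the level $k$ and quadruple offset $\ell$ in $\Oh(1)$ time. If the associated quadruple was pruned (span $\le Z$), we answer directly on the corresponding micro-array: either by scanning its rank-reduced copy in $\Oh(1/\tau)$ time, or by a single $\Oh(1)$ lookup-table access, depending on which branch of the pruning applies. Otherwise we use $\select(L_k,\ell)-\ell$ to recover the starting offset $X$ of the quadruple's block of concatenated bit vectors, remap $[i,j]$ into the concatenated bit vector (all extents at a level have equal length, so this is arithmetic), and for each of the $\Oh(1/\tau)$ candidates perform one $\rank$ in $\Oh(1)$ time to test the $\tau$-majority threshold and, if it fails, one $\select$ to report a witness position. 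For candidates inactive in some block of the quadruple, we follow the stored higher-level pointer (located in $\Oh(1)$ time via $L'_k$) directly to the relevant bit vector $\bitvector(\quadruple',y)$ and perform the analogous $\Oh(1)$ rank/select there — no $\Oh(\lg n)$ search is needed. We also deduplicate candidates across the $\Oh(1)$ overlapping quadruples, which costs $\Oh(1/\tau)$ total. Hence the query time is $\Oh(1/\tau)$.

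I do not expect a serious obstacle here: all the hard work — the top-down charging argument, the bottom-up micro-array pruning, and the pointer machinery — has been done in the preceding text, and this lemma is essentially the act of choosing $c$ large enough (exploiting $1/\tau=\Oh(\polylog(n))$) and substituting Lemma~\ref{lem:rank-select} for Lemma~\ref{lem:sel-only}. The one point requiring slight care is the bookkeeping that ties together the concatenated bit vectors, the offset vectors $L_k$, the pointer vectors $L'_k$, and the micro-arrays so that every component of a query is reached in $\Oh(1)$ time; but each of these has been individually shown to support $\Oh(1)$ access, so composing them is routine. The genuinely new difficulty — handling $1/\tau = \omega(\polylog(n))$, where the redundancy $\Oh(n/(\tau\lg^{c-1}n))$ can no longer be driven to $o(n)$ by choice of $c$ — is explicitly deferred, and is what the wavelet-tree ranking structure mentioned in the introduction is designed to address.
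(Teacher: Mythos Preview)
Your proposal is correct and follows essentially the same approach as the paper: substitute Lemma~\ref{lem:rank-select} for Lemma~\ref{lem:sel-only} on the concatenated per-level bit vectors, observe that the resulting redundancy totals $\Oh(n\lg n/(\tau\lg^{c}n))$, and exploit $1/\tau=\Oh(\polylog(n))$ to choose $c$ large enough to drive this below $o(n)$, while the pointer and micro-array machinery already established give $\Oh(1)$ access to every candidate's bit vector. The paper's own argument is just the terse version of exactly this; the only nit is that your per-level redundancy should be written as $\Oh(n/(\tau\lg^{c}n))$ rather than $\Oh(n_k/(\tau\lg^{c}n))$ (the concatenated bit vector at level $k$ has length $\Theta(n/\tau)$, not $\Theta(n_k/\tau)$), though your subsequent bound $n_k\le n$ and the resulting total are correct.
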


\noindent When $1/\tau$ is $\omega(\textup{polylog(n)})$, we require a
more sophisticated data structure to achieve $\Oh(n \lg(1/\tau))$ bits
of space.  Basically, we have to replace the data structure of
Lemma~\ref{lem:rank-select} representing the bit vectors with a more
space-efficient batch structure that groups all candidates together.
We present the details in Appendix~\ref{sec:handling-small-tau}.  This
data structure allows us to complete Theorem~\ref{thm:upper-bound}.

\section{Lower Bound}

In this section we prove Theorem~\ref{thm:lower-bound}.  The high
level idea is to show that we recover a sequence of concatenated
permutations of length roughly $1/\tau$ each using the query
operation.  This requires a more refined padding argument than that
presented by Navarro and Thankachan~\cite{NT16}.

Formally, we will describe a \emph{bad string}, defined using
concatenation, in which array $\inputarr[i]$ will store the $i$-th
symbol in the string.  Conceptually, this bad string is constructed by
concatenating some \emph{padding}, denoted $L$, before a sequence of
$m$ permutations over the alphabet $[\alpha_1,\ldots,\alpha_k]$,
denoted $R = \pi_1\cdot \ldots \cdot \pi_m$.  Notationally, we use
$\alpha_i^c$ to denote a concatenation of the symbol $\alpha_i$ $c$
times, and $a \cdot b$ to denote the concatenation of the strings $a$
and $b$.  In the construction we make use of \emph{dummy symbols},
$\beta$, which are defined to be symbols that occur exactly one time
in the bad string.  A sequence of $\ell$ dummy symbols, written
$\beta^{\ell}$, should be taken to mean: a sequence of $\ell$
characters, each of which are distinct from any other symbol in
$\inputarr$.

\paragraph{Padding definition.} Key to defining $L$ is a gadget $G(k,i)$,
that is defined for any integer $k \ge 2$ using concatenation as
follows: $G(k,i) = \alpha_1^{k'}\cdot\alpha_2^{k'}\cdot \ldots \cdot
\alpha_{i-1}^{k'} \cdot \alpha_{i+1}^{k'} \cdot \ldots \cdot
\alpha_{k}^{k'} \cdot (\alpha_i \cdot \beta^{k-2})^{k-1}\cdot \alpha_i
\beta^k$, where $k' = k^2 -k + 2$. An example in which $k=2$ can be
found in Figure~\ref{fig:lower-bound} in Appendix~\ref{sec:figures}.
Suppose we define $\inputarr$ such that $\inputarr[x..y]$ contains
gadget $G(k,i)$.  Let $\freq(x,y,\alpha)$ denote the number of
occurrences of symbol $\alpha$ in range $[x,y]$.  We define the
\emph{density} of symbol $\alpha$ in the query range $[x,y]$ to be
$\delta(x,y,\alpha) = \freq(x,y,\alpha) / (y-x+1)$.  We observe the following:

\begin{enumerate}
\item The length of the gadget $G(k,i)$ is $k(k^2-k+2)$ for all $i \in
  [1,k]$.  This fact will be useful later when we bound the total size
  of the padding $L$.
\item $\delta(x,y,\alpha_j) = 1/k$ for all $j \neq i$.  This follows
  from the previous observation and that, for all $j \neq i$,
  the number of occurrences of $\alpha_j$ in $G(k,i)$ is $k^2-k+2$.
\end{enumerate}

Next, we finish defining our array $\inputarr$ by defining $L$ to be
the concatenation $G(k,k) \cdot G(k,k-1) \cdot \ldots \cdot G(k,1)$.
Thus, our array is obtained by embedding the string $L \cdot R$ into
an array $\inputarr$.  Note that the total length of the array is
$k^2(k^2-k+2) + mk$.  Thus, the padding is of length $\Theta(k^4)$.

\paragraph{Query Procedure.} The following procedure can recover the
position of symbol $\alpha_i$ in $\pi_j$, for any $i \in [1,k]$ and $j
\in [1,m]$.  This procedure uses $\Theta(k)$ $(1/k)$-majority decision
queries: overall, recovering the contents of $R$ uses $\Theta(k^2m)$
queries.

Let $r_{j,1}, ..., r_{j,k}$ denote the indices of $\inputarr$ containing the
symbols in $\pi_j$ from left-to-right.  Moreover, consider the indices
of the $k$ occurrences of symbol $\alpha_i$ in $G(k,i)$, from
left-to-right, and denote these as $\ell_{i,k}, \ldots, \ell_{i,1}$,
respectively (note that the rightmost occurrence is marked with
subscript $1$).  See Figure~\ref{fig:lower-bound} for an
illustration.  Formally, the query procedure will perform a sequence
of queries, stopping if the answer is \texttt{YES}, and continuing if
the answer is \texttt{NO}.  The sequence of queries is
$[\ell_{i,1},r_{j,1}],[\ell_{i,2},r_{j,2}],\ldots,[\ell_{i,k},r_{j,k}]$.

We now claim that if the answer to a query $[\ell_{i,x},r_{j,x}]$ is
\texttt{NO}, then $\inputarr[r_{j,x}] \neq \alpha_i$.  This follows since the
density of symbol $\alpha_i$ in the query range is:
\begin{align*}
\frac{x + (i-1)(k^2 - k + 2) + (j-1)}{k(x + (i-1)(k^2 -k +2) + (j-1)) + 2} < \frac{1}{k}
\end{align*}
\noindent
On the other hand, if the answer is \texttt{YES}, we have that the
symbol $\alpha_i$ must be a $(1/k)$-majority for the following reasons:
\begin{enumerate}
\item No other symbol $\alpha_j$ where $j \neq i$ can be a
  $(1/k)$-majority. To see this, divide the query range into a
  middle-part, consisting of $G(k,i-1) \cdot \ldots \cdot G(k,1) \cdot
  \pi_1 \cdot \ldots \cdot \pi_{k-1}$, as well as a prefix (which is a
  suffix of $G(k,i)$), and a suffix (which is a prefix of
  $\pi_j$). The prefix of the query range contains no occurrence of
  $\alpha_j$ and is at least of length $k+1$.  The suffix contains at
  most one occurrence of $\alpha_j$.  Thus, the density of $\alpha_j$
  is strictly less than $1/k$ in the union of the prefix and suffix,
  exactly $1/k$ in the middle part, and strictly less than $1/k$
  overall.
\item No dummy symbol $\beta$ can be an $(1/k)$-majority,
  since these symbols appear one time only, and all query ranges have
  length strictly larger than $k$.
\item Finally, if $\inputarr[r_{j,x}] = \alpha_i$, then the density
  $\delta(\ell_{i,x},r_{j,x},\alpha_i)$ is: 
  $$\frac{x + (i-1)(k^2 - k + 2) + (j-1) +1}{k(x + (i-1)(k^2 -k +2) + (j-1)) + 2} \ge 1/k,$$
  \noindent 
  since $k \ge 2$.  Since we stop immediately after the first
  \texttt{YES}, the procedure therefore is guaranteed to identify the
  correct position of $\alpha_i$.
\end{enumerate}  

As we stated, the length of the array is $k^2(k^2-k+2) + mk = n$, and
for $n$ large enough the queries allow us to recover
$\frac{n-\Theta(k^{4})}{k} \lg(k!)$ bits of information using
$(1/k)$-majority queries for any integer $k \ge 2$, which is at least
$(n/k - \Theta(k^3))k \lg (k/e) = n\lg (k/e) - \Theta(k^4 \lg k)$
bits. Since there exists a unit fraction $\tau' = 1/\lfloor 1/\tau
\rfloor$ (if $\tau \in (0,1/2]$), there also exists a bad input of
  length $n$ in which $k = 1/\tau'$. Therefore, we have proved
  Theorem~\ref{thm:lower-bound}.

\bibliographystyle{splncs03} 
\bibliography{biblio}

\newpage
\appendix

\section{Hardness of the output-sensitive variant}
\label{setintersection}

P\v{a}tra\c{s}cu and Roditty~\cite{PatrascuR14} state the following folklore set intersection
conjecture:

\begin{conjecture}
Consider a data structure that preprocesses sets $S_{1},\ldots,S_{n}\subseteq [X]$,
and answers queries of the form “does $S_{i}$ intersect $S_{j}$?”. Let $X = \lg^{c} n$
for a large enough constant $c$. If the query takes constant time, the space must be
$\Omega(n^{2})$.
\end{conjecture}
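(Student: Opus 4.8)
No unconditional proof of this folklore conjecture is known, so what follows is the approach I would attempt together with the obstacle on which it currently founders. The plan is to obtain it as a cell-probe lower bound via \emph{asymmetric} (``lopsided'') communication complexity, in the tradition of Miltersen--Nisan--Safra--Wigderson and of P\v{a}tra\c{s}cu. First I would recast a data structure that uses $s$ cells of $w=\Theta(\lg n)$ bits and answers a query in $t$ probes as a two-party protocol: the querier Alice holds $(i,S_{i})$, the memory-holder Bob holds the contents of all $s$ cells together with $(j,S_{j})$, and they must decide whether $S_{i}\cap S_{j}=\emptyset$; each probe costs Alice $\lceil\lg s\rceil$ bits to name a cell and Bob $w$ bits to return its contents, so $t=\Oh(1)$ yields a protocol of total cost $\Oh(\lg s+\lg n)$. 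The target is to show this is impossible unless $\lg s=\Omega(\lg n)$ with a polynomial blow-up, i.e.\ $s=\Omega(n^{2}/\polylog n)$.

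Second, I would plant a hard instance of \emph{lopsided set disjointness}. Fix $S_{1},\dots,S_{n}$ to be a generic family of subsets of $[X]$ with $X=\lg^{c}n$, chosen so that every pair of candidate query sets is separated by some $S_{j}$; this is precisely why the regime $X=\Theta(\lg^{c}n)$ is the relevant one --- it is essentially the smallest universe at which a family of only $n$ ``tests'' can distinguish all $2^{X}$ possible query sets, which defeats the trivial objection that the whole input carries merely $n\lg^{c}n$ bits. One then applies the richness lemma of Miltersen et al.\ (or a direct round-elimination argument) to the combinatorial rectangle structure of the disjointness matrix restricted to this family, concluding that no protocol with $o(\lg n)$ communication on either side can decide the query, and hence that $s$ cannot be $n^{o(1)}$ times smaller than the number of query/answer constraints the memory must encode in a probe-accessible way.

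Third --- and this is the crux --- reaching $\Omega(n^{2})$ rather than merely $\Omega(n\cdot\polylog n)$ forbids arguing from a single hard instance: the $\Theta(n^{2})$ pairs $(i,j)$ must be accommodated simultaneously, and the argument must prevent the memory from amortising work across different values of $i$. The natural tool is a direct-sum / cell-sampling argument in the spirit of Larsen: sample a random sublinear set of cells, show that for almost every $i$ these cells pin down too few of the answers involving $S_{i}$, and deduce that $\Omega(n)$ cells are genuinely ``charged'' to each index $i$. Carrying this out against an adversary permitted $t=\Oh(1)$ adaptive probes is exactly where every known method stalls: proving \emph{super-linear} space lower bounds for \emph{static} data structures with constant query time is a notorious open problem, and current cell-probe techniques (the chronogram method, round elimination, cell sampling) deliver either space $n\cdot\polylog n$ or query time of order $\lg n/\lg\lg n$ near linear space --- neither excludes $o(n^{2})$ space at $\Oh(1)$ query time. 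For this reason the statement is recorded here as a conjecture rather than a theorem.
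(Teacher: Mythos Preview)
You have correctly identified that this is a \emph{conjecture}, not a theorem, and that no unconditional proof is known; the paper likewise offers no proof and simply cites it as folklore attributed to P\v{a}tra\c{s}cu and Roditty~\cite{PatrascuR14}, using it only as a hypothesis for a reduction to range $\tau$-majority decision. Your discussion of candidate lower-bound machinery (lopsided set disjointness, richness, cell sampling) and of the barrier---namely that current cell-probe techniques cannot deliver super-linear space lower bounds for static structures with $\Oh(1)$ query time---is accurate and goes well beyond what the paper itself says, but it is commentary rather than a proof, which is appropriate here.
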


\noindent
They mention that even for queries taking $\Omega(|X|/\lg n)$ time the
conjecture is plausible.

We point out the following a simple connection between the set
intersection conjecture and a structure supporting range
$\tau$-majority decision queries.  Given an instance of the set
intersection problem, we construct $\inputarr$ of length
$(2n+2)X$. First, for every set $S_{i}$, we define a string $B_{i}$ by
writing down elements of $X$ that do not belong to $S_{i}$ and then
the elements that do belong to $S_{i}$, so that $|B_{i}|=|X|$. Then,
$\inputarr$ is the concatenation of $B_{1}\cdot C \cdot \ldots\cdot
C\cdot B_{n}\cdot C^{4} \cdot (B_{n})^{r}\cdot C \cdot\ldots\cdot C
\cdot (B_{1})^{r}$, where $C=\beta^{|X|}$ (and every $\beta$ is a
dummy symbol).  To check if $S_{i}\cap S_{j}=\emptyset$, we translate
it into a range $[i,j]$ starting at the first character encoding an
element of $S_{i}$ in $B_{i}$ and ending at the last character
encoding an element of $S_{j}$ in $(B_{j})^{r}$. The length of the
range is $|S_{i}|+|S_{j}|+|X|\cdot(2t+2)$, where $t=n-i+j-1$. We claim
that the range contains a $1/(2|X|)$-majority element iff $S_{i}\cap
S_{j}\neq \emptyset$. This is because such an element must occur
$\lceil t+1+(|S_{i}|+|S_{j}|)/(2|X|)\rceil=t+2$ times.  However, the
encoding of every set is a permutation of $X$, so every element $x\in
X$ occurs $t$ times in total in the middle part between $B_{i}$ and
$(B_{j})^{r}$.  Then, there are two additional occurrences of $x$
exactly when $x\in B_{i}$ and $x\in B_{j}$.

The strong version of the above conjecture implies that, for a string of length $2n\lg^{c}n$,
any structure $1/\lg^{c}n$-majority decision queries either needs $\Omega(n^{2})$
space or takes $\Omega(\lg^{c-1}n)$ time to answer a query.

\section{\label{sec:handling-small-tau}Missing Details for Theorem~\ref{thm:upper-bound}} 

\subsection{Preliminaries: Sequences on Larger Alphabets}

In addition to the bit vectors operations we defined earlier, also make
use of generalized wavelet trees, which generalize rank and select
operations to larger alphabets:

\begin{lemma}[\cite{GGV03,FMMN07}]
\label{lem:wavelet}
Given an array $S[1..n]$ with elements drawn from the range
$[1,\sigma]$ we can store $S$ using $n \lg \sigma + o(n \lg \sigma)$
bits of space, such that the following operations can be supported in
$\Oh(1 + \lg \sigma / (\lg \lg n))$ time:
\begin{enumerate}
\item $\access(S,i)$ return the element $S[i]$.
\item $\rank_\alpha(S,i)$: return the number of occurrences of symbol $
\alpha$ in $S[1..i]$.
\item $\select_\alpha(S,j)$: return the position of the $j$-th occurrence of $\alpha$ in $S$, if it exists, and $-1$ otherwise.
\end{enumerate}
\end{lemma}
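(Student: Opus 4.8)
The plan is to realise the claimed bounds with a \emph{multi-ary (generalized) wavelet tree} in the style of Ferragina et al.~\cite{FMMN07}, built on top of the binary wavelet tree of Grossi et al.~\cite{GGV03}. First recall the binary construction: build a balanced binary tree whose leaves are the symbols $1,\ldots,\sigma$; an internal node $v$ owns a sub-alphabet and stores, for the subsequence $S_v$ of $S$ consisting of those symbols in their original relative order, a single bit vector $B_v$ recording whether each symbol descends left or right. Applying Lemma~\ref{lem:rank-select} to each $B_v$, an $\access$, $\rank$, or $\select$ query follows a root-to-leaf or leaf-to-root path doing $\Oh(1)$ work per level; since the $B_v$ on any fixed level together have length $n$, each level costs $n + o(n)$ bits, for $n\lceil\lg\sigma\rceil + o(n\lg\sigma)$ bits total and $\Oh(\lg\sigma)$ time.

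To speed this up, raise the branching factor to $\rho = \lceil \lg^{\varepsilon} n\rceil$ for a fixed $\varepsilon \in (0,1)$, so the tree has height $\Oh(\lg\sigma / \lg\rho) = \Oh(1 + \lg\sigma / \lg\lg n)$. Each internal node $v$ now stores, in place of a bit vector, the sequence $C_v$ over the alphabet $[1,\rho]$ recording which of its (at most $\rho$) children each symbol of $S_v$ descends to. Navigation is unchanged in spirit — for $\access$ we descend; for $\rank$ we descend while remapping the position via $\rank_\alpha$ on the $C_v$'s; for $\select$ we ascend while remapping via $\select_\alpha$ — but now each step must cost $\Oh(1)$, so we need a data structure supporting $\access$, $\rank_\alpha$, $\select_\alpha$ on a sequence over a \emph{polylogarithmic} alphabet $[1,\rho]$ in $\Oh(1)$ time, using $|C_v|\lg\rho + o(|C_v|\lg\rho)$ bits plus a $o(n)$ global term for shared tables.

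The heart of the argument, and the only place with real work, is this small-alphabet primitive. For $\access$ we pack $C_v$ into machine words and read the relevant word. For $\rank_\alpha$ we use a two-level directory: partition $C_v$ into superblocks of $\Theta(\lg^2 n)$ symbols and blocks of $b = \lfloor \tfrac12 \lg n / \lg\rho \rfloor = \Theta(\lg n / \lg\lg n)$ symbols; store, per symbol, cumulative counts up to each superblock boundary ($\Oh(\rho)$ counts of $\Oh(\lg n)$ bits, amortised over $\Theta(\lg^2 n)$ symbols, hence $o(|C_v|\lg\rho)$) and, within a superblock, counts up to each block boundary (in $\Oh(\lg\lg n)$ bits each); the residual in-block count comes from a universal lookup table indexed by the packed contents of a block, which occupy at most $\tfrac12\lg n$ bits, so the table has $\Oh(\sqrt{n}\cdot\polylog n) = o(n)$ entries. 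For $\select_\alpha$ we additionally keep, per symbol and per superblock, the number of occurrences it contains; a query locates the target superblock via this $\Oh(\rho)$-length array (or a further $\polylog n$-sized directory) and finishes inside it with another universal table on packed blocks. The delicate points are (i) keeping every lookup table $o(n)$ by ensuring each indexed block fits in $\tfrac12\lg n$ bits, (ii) making all directories $o(|C_v|\lg\rho)$, and (iii) the more intricate bookkeeping for $\select$; this is exactly the content of~\cite{GGV03,FMMN07}.

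Finally, assemble: the node sequences on a single level of the $\rho$-ary tree concatenate into a length-$n$ sequence over $[1,\rho]$, costing $n\lg\rho + o(n\lg\rho)$ bits, and there are $\lceil\lg_\rho\sigma\rceil$ levels, for a total of $n\lceil\lg_\rho\sigma\rceil\lg\rho + o(n\lg\sigma) = n\lg\sigma + o(n\lg\sigma)$ bits — the ceiling loses only $\Oh(n\lg\rho) = o(n\lg\sigma)$ at the last level, which we absorb into the redundancy — plus $o(n)$ for the shared tables. Each of $\access$, $\rank_\alpha$, $\select_\alpha$ touches $\Oh(\lg\sigma/\lg\rho) = \Oh(1 + \lg\sigma/\lg\lg n)$ levels with $\Oh(1)$ work each, giving the stated time. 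I expect item (iii) to be the main obstacle: constant-time $\select_\alpha$ over the small alphabet within the stated redundancy, since $\select$, unlike $\rank$, is not resolvable by a single pass over fixed-size blocks and needs the extra per-symbol occurrence directories.
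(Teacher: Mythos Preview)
The paper does not prove this lemma; it is quoted as a known result from~\cite{GGV03,FMMN07} and used as a black box. Your sketch is precisely the multi-ary wavelet tree construction of Ferragina et al.~\cite{FMMN07} layered over the binary wavelet tree of~\cite{GGV03}, so it matches the cited sources that the paper defers to.
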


\noindent Furthermore, given a position $i \in [1,n]$ a wavelet tree
can support a \emph{batch rank} operation in $\Oh(\sigma)$ time that
returns $\rank_\alpha(S,i)$ for all $\alpha \in
[1,\sigma]$~\cite{GHMN11,DHMNS13}.

\subsection{Handling small values of $\tau$}

Consider quadruple $\quadruple$ at level $k$, which has $\ell =
\Oh(1/\tau)$ active candidates, and let $[i',j']$ be the range spanned
by the extent of $\quadruple$.  Note that we can compute $i'$ and $j'$
in $\Oh(1)$ time using $L_k$ and Lemma~\ref{lem:find-level}.  Let
$M(\quadruple)$ be a bit vector of length $j' - i' + 1$, in which we
put a one at position $k$ if $A[i' + k]$ was an active candidate, and
$0$ otherwise.  Suppose we store $M(\quadruple)$ using
Lemma~\ref{lem:rank-select} and the same concatenation trick as
before: the leading term in the space bound will be no more than the
previous approach, but now the redundancy becomes $\Oh(n /
\textup{polylog}(n))$ for each level, and does not depend on $\tau$.
The problem is that we have lost the ability to distinguish between
the different candidates for each quadruple.  To do this, we need to
define some additional structures, and make use of the following
technical lemma:

\begin{lemma}
\label{lem:wt-partition}
A sequence $S[1..n]$ of elements drawn from the range $[1,\sigma]$ can
be stored using $\Oh(n \lg \sigma)$ bits, so that given subset of the
elements $\mathcal{Y} \subseteq [1,\sigma]$ and a range $[i,j]$ the frequency of
every $y \in \mathcal{Y}$ in $S[i..j]$ can be computed in time
$\Oh(|\mathcal{Y}| + \sigma / \lg n)$ time.
\end{lemma}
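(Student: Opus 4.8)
Lemma \ref{lem:wt-partition} — the plan.

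The goal is a structure of $\Oh(n\lg\sigma)$ bits that, given a query set $\mathcal{Y}\subseteq[1,\sigma]$ and a range $[i,j]$, reports all frequencies $\freq(i,j,y)$ for $y\in\mathcal{Y}$ in $\Oh(|\mathcal{Y}|+\sigma/\lg n)$ time. The plan is to build a wavelet tree over $S$ (Lemma~\ref{lem:wavelet}) but to exploit the batch-rank primitive, which computes $\rank_\alpha(S,i)$ for \emph{all} $\alpha$ in $\Oh(\sigma)$ time. The frequency of $y$ in $S[i..j]$ is $\rank_y(S,j)-\rank_y(S,i-1)$, so a single pair of batch-rank calls would give all of $\mathcal{Y}$'s frequencies in $\Oh(\sigma)$ time — but $\Oh(\sigma)$ is too slow when $|\mathcal{Y}|=\Oh(1/\tau)$ is much smaller than $\sigma$. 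The fix I would use is to control the \emph{branching factor} of the wavelet tree so that the $\Oh(\sigma/\lg n)$ term comes from doing batch rank only at nodes along the relevant root-to-leaf paths, while the $|\mathcal{Y}|$ term absorbs the per-element work once the paths separate.

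Concretely, first I would build the wavelet tree with a large fan-out $\rho = \Theta(\lg^{\varepsilon} n)$ (or $\rho=\Theta(\sqrt{\lg n})$), so that the tree has height $\Oh(\lg\sigma/\lg\lg n)=\Oh(1)$ when $\sigma=\polylog(n)$, or $\Oh(\lg_\rho\sigma)$ in general, and each internal node stores a sequence over alphabet $[1,\rho]$ supporting batch rank in $\Oh(\rho)$ time. To answer a query, I would take the (at most $|\mathcal{Y}|$) leaves corresponding to $\mathcal{Y}$ and consider the union of their root-to-leaf paths; this is a subtree with $\Oh(|\mathcal{Y}|)$ leaves, hence $\Oh(|\mathcal{Y}|\cdot\text{height})$ nodes. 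At each such node I would push the current pair of positions (the images of $i-1$ and $j$) down using, for each relevant child, one $\rank$ in $\Oh(1)$ time — \emph{not} a batch rank — so that the total work is $\Oh(|\mathcal{Y}|\cdot\lg_\rho\sigma)$; at the leaves the difference of the two positions is exactly the desired frequency. The catch is that $\lg_\rho\sigma$ is not $\Oh(1)$ when $\sigma$ is large. This is where the $\sigma/\lg n$ term enters: near the root, the root-to-leaf paths of distinct $y\in\mathcal{Y}$ coincide, and at a node whose subtree still contains "many" (say $\ge\lg n$) elements of $\mathcal{Y}$ it is cheaper to do one batch rank in $\Oh(\rho)$ time than $\Omega(\lg n)$ individual ranks; summing the batch-rank cost over the $\Oh(\sigma/\lg n)$ such "heavy" nodes gives the $\Oh(\rho\cdot\sigma/\lg n)=\Oh(\sigma/\lg n)$ bound after choosing $\rho$ small enough, while below the heavy frontier each element's path is traversed individually but the subtree there has $<\lg n$ elements, bounding the height contribution by $\Oh(\lg_\rho\lg n)=\Oh(1)$ and the total by $\Oh(|\mathcal{Y}|)$.

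The main obstacle I expect is getting the two regimes to stitch together cleanly: precisely specifying the "heavy frontier" (nodes whose subtree contains at least $\lg n$ query elements), arguing there are only $\Oh(\sigma/\lg n)$ of them and only $\Oh(|\mathcal{Y}|/\lg n)$ maximal ones, and checking that the below-frontier part really does have height $\Oh(1)$ for the right choice of $\rho$ — equivalently, balancing $\rho$ so that $\rho\cdot(\sigma/\lg n)=\Oh(\sigma/\lg n)$ and simultaneously $\lg_\rho\lg n = \Oh(1)$, which forces $\rho = (\lg n)^{\Theta(1)}$ and is consistent. The space is immediate: a wavelet tree of any constant-or-larger fan-out over $[1,\sigma]$ occupies $n\lg\sigma + o(n\lg\sigma)=\Oh(n\lg\sigma)$ bits by Lemma~\ref{lem:wavelet} (the multiary generalization has the same leading term), so no extra argument is needed there. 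I would also remark that when $\sigma=\Oh(\polylog n)$ — the regime actually invoked in Section~\ref{sec:fast-query} for $1/\tau<\sqrt{\lg n}$, and more generally whenever the extent's candidate alphabet is polylogarithmic — the tree has constant height outright and a single pass of $\Oh(|\mathcal{Y}|)$ individual ranks down the $\Oh(|\mathcal{Y}|)$-leaf subtree already gives $\Oh(|\mathcal{Y}|)$ time, so the $\sigma/\lg n$ term is only needed for the large-$\sigma$ case.
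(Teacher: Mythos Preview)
Your high-level instinct—separate the alphabet into a ``top'' part handled by batch rank and a ``bottom'' part handled by individual $\Oh(1)$ ranks—is exactly the idea, but your realisation via a single multiary wavelet tree with a heavy-frontier argument has an arithmetic gap that breaks the bound. You require simultaneously that $\rho\cdot(\sigma/\lg n)=\Oh(\sigma/\lg n)$ (so that batch rank over the $\Oh(\sigma/\lg n)$ heavy nodes stays within budget) and that $\lg_\rho\lg n=\Oh(1)$ (so that below the frontier each path has constant length). The first forces $\rho=\Oh(1)$; the second forces $\rho=(\lg n)^{\Omega(1)}$. These are contradictory, and your conclusion that they are ``consistent'' and jointly force $\rho=(\lg n)^{\Theta(1)}$ is simply wrong: with $\rho=(\lg n)^{\varepsilon}$ the heavy-part cost is $\Oh(\sigma/\lg^{1-\varepsilon}n)$, not $\Oh(\sigma/\lg n)$, and that slack becomes a $\lg^{\varepsilon}n$ blow-up once you sum over the $\lg n$ levels of the quadruple decomposition. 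There is also a definitional slip: you declare a node heavy when its subtree contains $\ge\lg n$ \emph{query} elements, but then bound the below-frontier height as though it contained $<\lg n$ \emph{alphabet} symbols; only the latter bounds the height of the remaining subtree.

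The paper sidesteps this tension by decoupling the two parts instead of forcing one fan-out to serve both. It partitions $[1,\sigma]$ into groups of size $\lceil\lg n\rceil$, builds one wavelet tree on the group-id sequence $S'$ (alphabet $\Oh(\sigma/\lg n)$, so a single batch-rank call at each of $i,j$ costs $\Oh(\sigma/\lg n)$), and a second wavelet tree on the concatenated within-group subsequences $S_z$ (alphabet $\Oh(\lg n)$, so individual rank is $\Oh(1)$ by Lemma~\ref{lem:wavelet}). The batch rank on $S'$ remaps $[i,j]$ into every $S_z$ simultaneously; then each $y\in\mathcal{Y}$ is finished with one $\Oh(1)$ rank in $S_{g(y)}$. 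This is exactly your top/bottom split, but with the top collapsed to one batch-rank call and the bottom to a constant-time structure, so no single parameter $\rho$ has to satisfy both of your constraints.
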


\begin{proof}
We partition the elements in $S$ into \emph{groups} of size $\lceil
\lg n \rceil$, so that symbol $i$ is in group $g(i) = \lceil i/\lceil
\lg n \rceil \rceil$.  Let $S'$ be the sequence such that $S'[i] =
g(S[i])$.  Finally, for each group $z \in [1,g(\sigma)]$, let $S_z[i]
= S[\select_z(S',i)]$.  We construct and store the sequences $S'$, and
$S_z$ for $z \in [1,g(\sigma)]$ using Lemma~\ref{lem:wavelet}.  In
particular we use two wavelet trees: one for $S'$ and one for the
concatenation $S_1,\ldots, S_{g(\sigma)}$.  Since the alphabet size of
$S'$ is $g(\sigma) = \Oh(\sigma /\lg n)$ and the alphabet size of each
$S_z$ is $\Oh(\lg n)$, the total space is $\Oh(n \lg (\sigma / \lg n)
+ n \lg \lg n) = \Oh(n \lg \sigma)$ (we assume that $\sigma \geq \lg n$
as otherwise we can query the wavelet tree with every $u\in \mathcal{Y}$
separately).

For the query $[i,j]$, we make use of the batch rank query operation on
$i$ and $j$ to remap the query to the appropriate range for each $S_z$
individually.  We also perform the batch rank query on position $n$ in
$S'$ to compute the lengths of each $S_z$.  Using these lengths we can
compute the start and end positions of $S_z$ for each $z \in
[1,g(\sigma)]$.  These batch queries take $\Oh(g(\sigma)) =
\Oh(\sigma/\lg n)$ time.  For each $y \in \mathcal{Y}$, we can compute
their group $z = g(y)$ and their offset in the second wavelet tree
using the partial sums and results of the batch rank queries.
Finally, in constant time (since $S_z$ has $\Oh(\lg n)$ distinct
elements), we can compute rank queries on corresponding to the range
$[i,j]$ in $S_z$ to get the frequency of each element in $\mathcal{Y}$
that happens to be in group $z$.  Since each individual element takes
$\Oh(1)$ time to process after the initial batch query on $S'$, we use
$\Oh(|\mathcal{Y}| + \sigma/\lg(n))$ time in total.
\qed
\end{proof}

Consider the subsequence $P$ of $A[i'..j']$ induced by the one bits in
$M$: i.e., $P[i] = A[i' + \select(M(\quadruple),i)]$.  Moreover,
suppose we replace the elements in $P$ by their ranks in the implicit
ordering of candidates in $\quadruple$.  Thus, $P$ has an alphabet
from the range $[1, \ldots, \sigma_P = \Oh(1/\tau)]$, and we represent
it using the data structure of Lemma~\ref{lem:wt-partition}.  At each
level, we concatenate the structure for $P$ for each quadruple, and
for each quadruple this structure costs
$\Oh(\rank(M(\quadruple),j'-i'+1)\lg(1/\tau))$ bits.  Thus, by the same
charging argument as in Lemma~\ref{lem:encoding-structure}, we can
bound the total cost of these concatenated sequences $P$ on all
levels by $\Oh(n \lg (1/\tau))$ bits.

Now, consider a query $[i,j]$.  The quadruple $\quadruple$ associated
with $[i,j]$ has a bit vector $M(\quadruple)$.  Using $M(\quadruple)$
we remap the query $[i,j]$ to $[\ell_1 = \rank(M(\quadruple),i - i' +
  1), \ell_2 = \rank(M(\quadruple),j - i' + 1)]$ in $\Oh(1)$ time.
Our goal is:
\begin{enumerate}
\item Extract the frequencies of all candidates that were active in $\quadruple$.
\item Follow up to $\Oh(\lg n)$ pointers to search for candidates that
  were inactive in $\quadruple$.  Suppose for one such pointer there
  are $q$ candidates to verify.
\end{enumerate}

Since the number of distinct elements in each $P$ is $\Oh(1/\tau)$, we
have that at each level we can verify $q$ candidates in $\Oh(q+1/(\tau \cdot \lg n))$
time.  Thus, we can verify all candidates
associated with $\quadruple$ in time $\Oh(1/\tau)$, as there are at
most $\lg n$ levels. This completes the proof of
Theorem~\ref{thm:upper-bound}.

\section{\label{sec:figures}Additional Figures}

\begin{figure}[h]
\includegraphics[width=\textwidth]{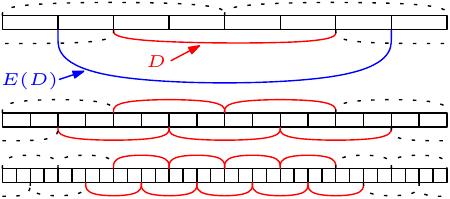}
\caption{\label{fig:extents}A quadruple $\quadruple$ and its extent
  $E(\quadruple)$.  Quadruples that overlap $\quadruple$ at lower
  levels (shown in red) are fully contained in the extent
  $E(\quadruple)$.}
\end{figure}

\begin{figure}[h]
\includegraphics[width=\textwidth]{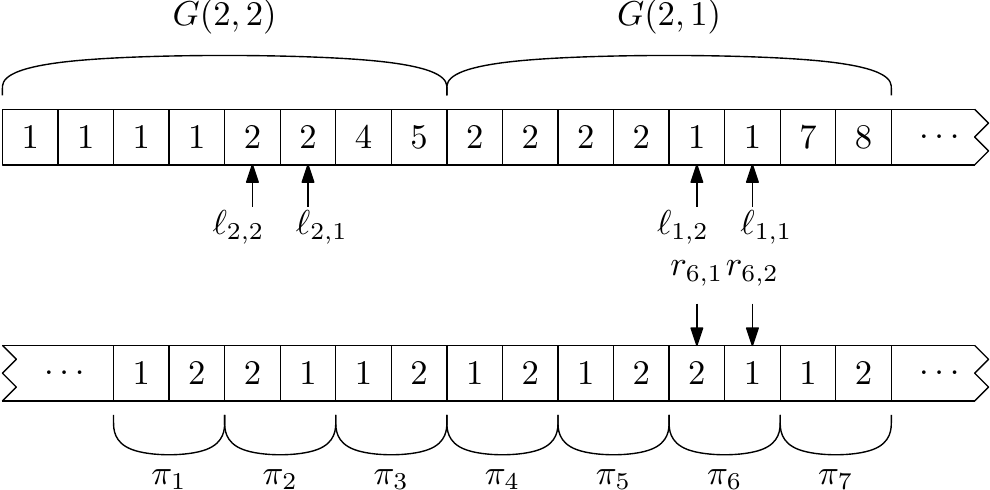}
\caption{\label{fig:lower-bound}Example of $\inputarr$ for the case where
  $k=2$, as well as other notational definitions.}
\end{figure}

\end{document}